\newcommand{\QED}{\hspace*{\fill}\rule{2.5mm}{2.5mm}}
\newtheorem{theorem}{Theorem}[section]
\newtheorem{definition}{Definition}[section]
\newenvironment{proof}{\noindent{\bf Proof\ }}{\QED\\}
\newtheorem{lemma}{Lemma}[section]
\newtheorem{corollary}{Corollary}[section]
\newcommand{\R}{\mathbb{R}}
\newcommand{\PL}{\mathcal{PL}}
\newcommand{\PPL}{\mathcal{PPL}}
\newcommand{\LB}{\mathcal{LB}}
\newcommand{\PLB}{\mathcal{PLB}}
\newcommand{\ALB}{\mathcal{ALB}}
\newcommand{\PALB}{\mathcal{PALB}}
\newcommand{\F}{\mathcal{F}}
\begin{document}

\title{A framework for fitting sparse data}
\author{ Reza Hosseini$^{1}$, Akimichi Takemura$^2$, Kiros Berhane$^3$\\
IBM Research$^{1}$, University of Tokyo$^2$, University of Southern California$^3$\\
$^1$rezah@sg.ibm.com
 }

\maketitle


\begin{abstract}

This paper develops a framework for fitting functions with domains in the Euclidean space, 
when data are sparse but a slow variation allows for a useful fit. We measure the
variation by Lipschitz Bound (LB) -- functions which admit smaller LB are considered to vary more slowly. Since
most functions in practice are wiggly and do not admit a small LB, we extend this framework by approximating a
wiggly function, $f$,  by ones which admit a smaller LB and do not deviate from $f$ by more than a specified
Bound Deviation (BD). In fact for any positive LB, one can find such a BD,
thus defining a trade-off function (LB-BD function) between the variation measure (LB) and the deviation measure (BD).
We show that the LB-BD function satisfies nice properties: it is non-increasing and convex. We also present a method to obtain it using convex optimization. For a function with given LB and BD, we find the optimal
fit and present deterministic bounds  for the prediction error of various methods. 
Given the LB-BD function, we discuss picking an
appropriate LB-BD pair for fitting and calculating the prediction errors.  The developed methods can naturally
accommodate an extra assumption of periodicity to obtain better prediction errors. Finally we present the
application of this framework to air pollution data with sparse observations over time.
\end{abstract}

\vspace*{.3in}

\noindent\textsc{Keywords}: {Lipschitz Bound; Sparse data; Interpolation; Approximation; Regularization; Convex Optimization;
Periodic Function}

\section{Introduction}
\label{sect:introduction}

This paper investigates the problem of approximating (fitting) functions when data are sparse over time
or spatial domains of data.
Such {\it data-sparse} situations are often encountered when collecting large amounts of data is
expensive or practically implausible. For example in many air pollution studies including the {\it Southern
California Children Health Study}, (  \cite{paper-exposure-franklin} and
  \cite{paper-exposure-gauderman-2007}), only sparse data are collected over time for concentrations of Ozone in some homes and schools in Southern California to assess the effect
of air pollution exposure on children's lung function. Using such sparse data, we are interested in approximating
the exposure for a given location over a time period of interest.  In such cases
some properties of the data might allow for a good approximation (prediction/fit) despite the sparse data structure. For example for Ozone concentrations in Southern California, bi-weekly measurements (the measuring filters are installed and
collected in such periods) are available and therefore the process over time varies slowly. Left panel of Figure \ref{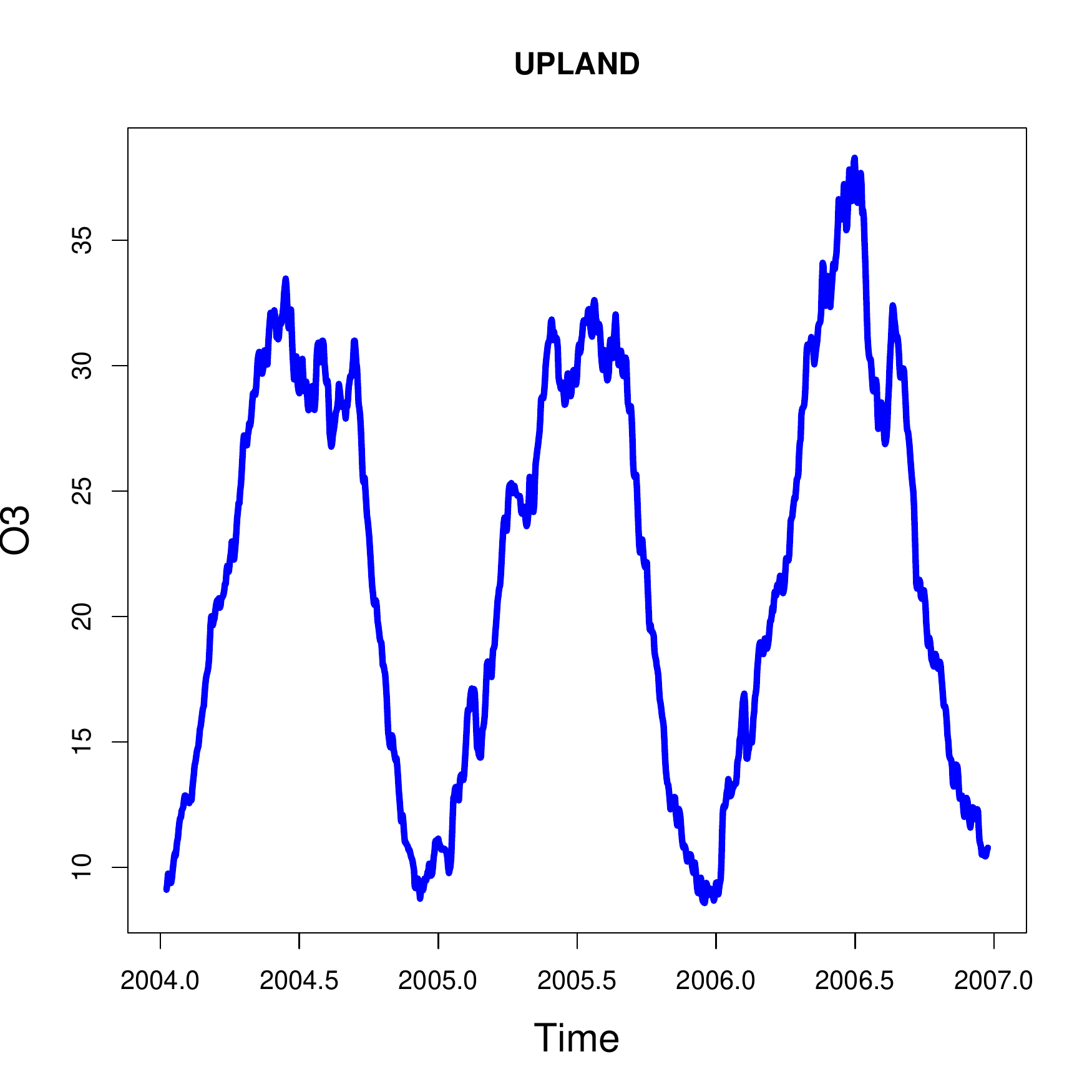} depicts the biweekly moving average of Ozone concentration for a central site (Upland, CA)  where complete data are available during 2004--2007.
Other properties of the process might also help us fit the function in sparse data situations. For example many processes over time show an approximate periodic pattern on annual scale  (e.g.\;weather and air pollution). The approximate periodicity for the Ozone process in Southern California can also be seen in left panel of Figure \ref{O3_biweekly_UPL_2004_2007.pdf}. This work utilizes such properties to  improve the methods of fitting.

When we are working with (at least one-time) differentiable real functions defined on the $d$-dimensional Euclidean space $\R^d$, we can naturally define a measure of
variation of the function $f$  by the supremum of its first-order derivative (or gradient for multidimensional case) on the
domain: $\sup ||f'(x)||,\;x \in D$, where $||.||$ is the Euclidean norm. Of course this definition is not useful for most processes we encounter in the real world -- even if they show some global slow-variation -- because often there are irregular small variations which make the function non-differentiable (left panel of Figure \ref{O3_biweekly_UPL_2004_2007.pdf}).

The key concept we use
in this paper is a measure  of variation (or roughness) for general non-differentiable functions on a given
domain. At first we consider functions which admit a Lipschitz Bound (LB) on the specified domain and assign the
variation of the function to be the infimum of all such bounds. A function $f:D\subset \R^d \rightarrow \R$,
where $D$ is a subset of $\R^d$ is said to have Lipschitz Bound (LB), $m,$ if
$|f(x)-f(y)|\leq m ||x-y||,$ where $||.||$ denotes $L^2$ norm. The interpolation of functions with a given
Lipschitz Bound is also considered in \cite{Gaffney-1976}, \cite{Sukharev-1978}, \cite{Beliakov-2006}, \cite{Sergeyev-2010} and in these works the
 {\it optimal central algorithm} was developed which minimizes the prediction error. \cite{Beliakov-2006} developed a fast algorithm for computing central optimal interpolant.  The Lipschitz framework immediately includes piece-wise differentiable functions but this generalization is still not
adequate (useful) for processes we encounter in practice. This is because such processes do not admit a small enough Lipschitz Bound for the fits or the prediction errors to be useful. We call such functions, ``wiggly'' functions. (Note that this is not an accurate
mathematical definition.) As one of the contributions of this work, we extend this
framework by approximating a wiggly function $f:D\subset \R^d \rightarrow \R,$ which does not admit a small enough LB by another function $g$, which does admit a small LB and deviates from $f$ only by a small {\it Bound Deviation} (BD), $\sigma$ in terms of the sup norm: $||f-g||_{\infty}=\sup_{x \in D}|f(x)-g(x)|.$ Then we find the optimal approximation of a given function $f$ with known LB and BD and provide the prediction (approximation) errors for the optimal solution and other standard approximation methods, thus extending the results in \cite{Gaffney-1976}, \cite{Sukharev-1978}, \cite{Beliakov-2006},  \cite{Sergeyev-2010}  to a much more practical class of functions.

Another key observation is: for each given LB=$m$, (which may not be satisfied by $f$), we can consider all the functions $g$
which satisfy the LB, $m$, and calculate the infimum of the distance of all those function from $f$ (in
terms of the supremum norm) and denote it by $\gamma_f(m)$.  Thus we can construct a generalized concept of LB in which
a given function can be considered to have any $m\geq 0$ as  LB, albeit up to a Bound Deviation, $\gamma_f(m)$, which is the price one pays for getting $m$ as LB. We call this trade-off function, $\gamma_f$,
the LB-BD function (or curve) of $f$.  This concept is similar but not identical to the {\it bias--variance} trade-off considered in statistical learning considered in \;  \cite{book-hastie-tib}. Here we thoroughly study the properties of the LB-BD trade-off function and explicitly use it in picking appropriate LB and BD for prediction. This methodology may be also useful for the methods which use bias--variance trade-off -- a comprehensive study of which is outside the scope of this paper.  For the simulation and applications, we mainly focus on the 1-dimensional (1-d) domain case. This framework  can also be considered for the multidimensional case, which deserves an extensive analysis that is beyond the scope of this paper.

\begin{figure}
\centering
\includegraphics[width=0.4\textwidth]{O3_biweekly_UPL_2004_2007.pdf}\includegraphics[width=0.4\textwidth]{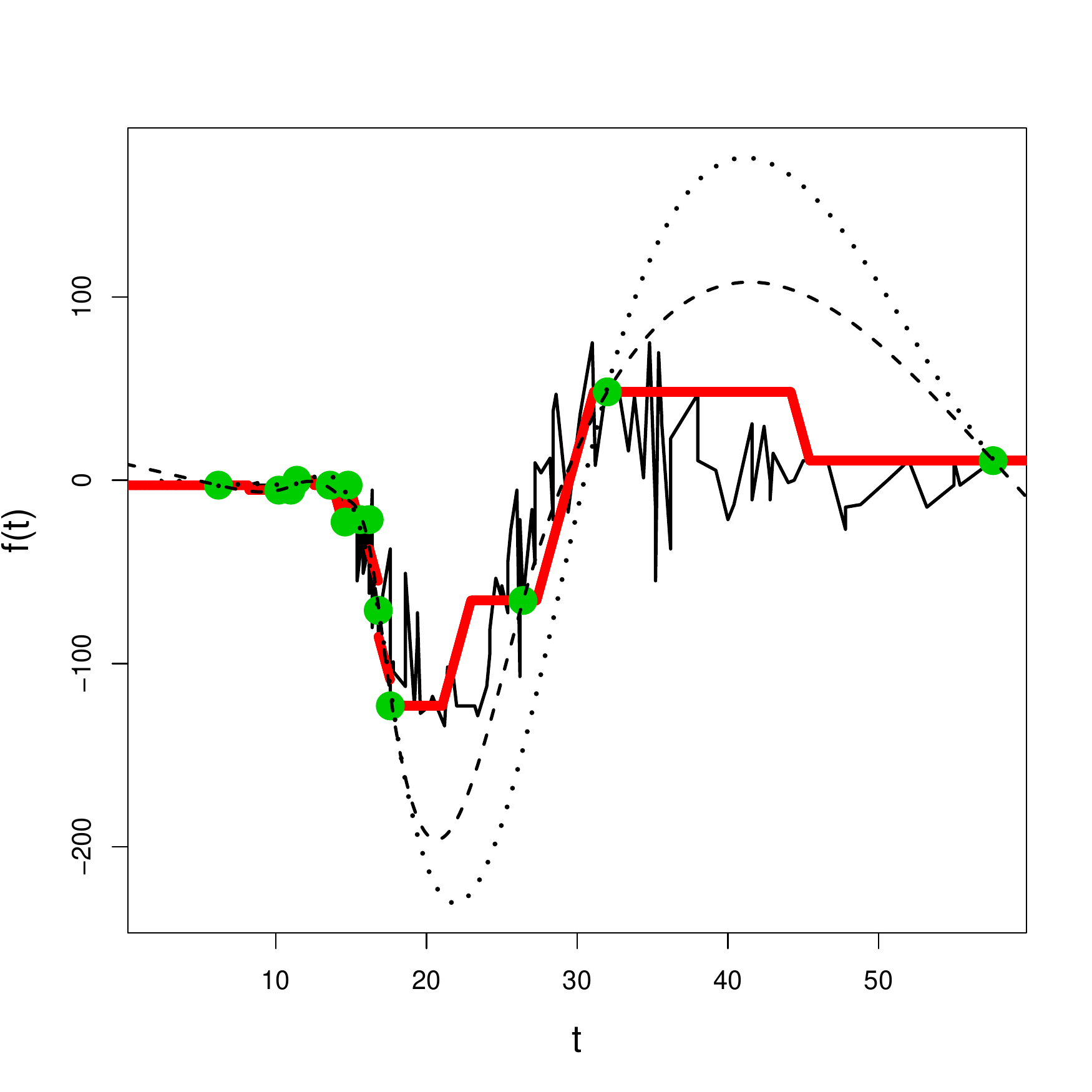}

\caption{\small (Left Panel) Biweekly moving average of Ozone concentrations ($O_3$) plotted  at a central
station in Upland(UPL) in Southern California. We observe a general pattern which varies slowly over time and some small variations on top of that.
The process is also approximately periodic at least when we focus on one given year (i.e.\;the beginning and end values approximately are the same). (Right Panel) Motorcycle head acceleration data. The data points (filled circles) are sampled from the full data
 (thin black curve).   The LOESS (dotted) and the smoothing spline (dashed)
 fits are poor for values bigger than 30 (where the data are sparse) and goes way beyond the range of data points,
 while $Lipfit$ method introduced in this work (thick curve) still achieves a reasonable fit.} 

\label{O3_biweekly_UPL_2004_2007.pdf}
\label{Lipfit_method_with_err_mcycle_example.pdf}
\end{figure}

Since we assume the data are very sparse in some subregions of the domain, the use of classical statistical methods such as regression may not be suitable. This is because with only few data points, it is either impossible to estimate the trends and the
error (due to having too many parameters) or the estimates will be extremely poor, resulting in issues such as
``over-fitting''.  As an example consider the data consisting of a series of measurements of head accelerations ($y$-axis)
versus $time$ ($x$-axis) (Figure \ref{Lipfit_method_with_err_mcycle_example.pdf}, right panel) in a simulated motorcycle
accident, used to test crash helmets (see \cite{paper-smoothing-silverman}). These data are available as a part
of the R package library, {\it MASS}. The full data set is depicted (black curve) and we have chosen a subset of
15 points (filled circles). The {\it locally weighted regression} (LOESS) fit (dotted), (e.g.\; \cite{book-loess-cleveland}), and {\it smoothing spline} fit (dashed), (e.g.\; \cite{book-hastie-tib}) are also given. We observe that while these methods perform well at the beginning of the series, where more data are available,
they fail dramatically at larger values (larger than 30), where less data are available. 
In contrast, the thick curve is created using the {\it Lipfit} method
developed in this paper, performs better by tracking the data closely. The problem with most of the classical
curve fitting methods (regression/regularization) -- when applied to the data which are sparse in some subregions -- is that
there is nothing to prevent their fits from going well beyond the range of the data as shown in the above
example. On the other hand $Lipfit$ is guaranteed to stay within the data range by definition. In general methods which produce fits which stay within the range of the data are desirable and can
be useful in data sparse situations. We say a method is {\it data-range faithful} if it does not go beyond the available
data range. 

The remainder of the paper is organized as follows. Section \ref{sect:background} includes a historical background of data fitting and interpolation methods
and outlines their connection with this work.
Section \ref{sect:theory} develops a framework for approximating
(fitting) slow-moving curves which are defined using the Lipschitz Bound and Bound Deviation. We also propose
several loss functions to assess the goodness of approximation methods. We  discuss various approximation methods (some of which are interpolation methods) and find an optimal one: $Lipfit$.
Section \ref{sect:simulation} compares the discussed approximation methods using simulations.  Section  \ref{sect:LB-BD-trade-off} discusses the trade-off between the LB and BD for a
given function; defines the LB-BD function associated with a given function; and shows the nice properties of the LB-BD curve such as convexity. It also discusses methods for calculating the LB-BD function, including a convex optimization method. We also discuss 
finding appropriate parameters (LB-BD) for applying the methods and calculating approximation errors in practice, including a {\it Prediction Error Minimization Method} (PEM). Section
\ref{sect:application} describes the application of the method in measuring Ozone exposure. Finally Section
\ref{sect:discussion} discusses some remaining issues and extensions.

\section{Background}
 \label {sect:background}

Throughout this paper we use the words function approximation, fitting and prediction, interchangeably and to
refer to any method which inputs data and outputs values for the function at unknown points. However it
is useful to clarify what is usually meant by interpolation here and in the relevant literature because
several of the methods discussed in this work are interpolation
methods. Interpolation refers to any method which inputs $n$ values of a {\it target function} $f$ defined
on $D\subset \R^d$: $(x_1,f(x_1)),\cdots,(x_n,f(x_n))$ and outputs a function $\hat{f}$ on $D$ which {\it agrees}
with $f$ on the given points (and it is supposed to be close to $f$ values outside the given points). This is in
contrast to classical fitting methods in statistics (e.g.\;linear regression) for which often the estimated curve
does not go through the given points. In order to include all possible methods, we use the term {\it
approximation} to refer to any method that given the input data returns a function $\hat{f}$ on $D$.
A very simple example of approximation -- which is not an
interpolation method -- is a method we denote by $AVG$ and simply takes the average of the available values of $f$
and assign that to all the domain: $\hat{f}(x)=\sum_{i=1}^n f(x_i)/n$. We do not think the distinction between
interpolation and general approximation is really useful since any approximation method can be slightly tweaked
to become an interpolation method by redefining the value of the approximation at the available points to be the
same as the data. Moreover there are usually infinitely many {\it out-of-sample} points as compared to finitely many
{\it in-sample} points. Therefore in practice, we often care mostly about the out-of-sample performance anyway.

Since we consider some interpolation methods in this paper, here we discuss some historical background on this topic.
Interpolation of points surprisingly goes back to astronomy  in ancient Babylon and Greece when it was all about
time keeping and predicting astronomical events (  \cite{paper-interpolation-meijering}). Later Newton and Lagrange
studied the problem of interpolating a function $f$ defined on an interval $[a,b]$ with given values on $n$ points:
$x_1,\cdots,x_n$, by a polynomial of degree $n$ and arrived at the same solution (with different computational
methods). The Lagrange method gives this polynomial, $p$, by defining $l_i(x)=\prod_{j=1;j\neq i}^n
\frac{x-x_i}{x_j-x_i}$ and letting $p_n(x)=\sum_{i=1}^n l_i(x)f(x_i).$
Moreover it can be shown (see \cite{book-numerical-cheney}) that if $f$ is $(n+1)$ times differentiable and $|f^{(n+1)}|\leq M$, then\begin{equation}
|f(x)-p_n(x)|\leq \frac{1}{(n+1)!}M\prod_{i=1}^n |x-x_i|.\label{eqn-err-bound-newton}
\end{equation} 
Unfortunately both the
existence of $f^{(n+1)}$ and having a small bound are rare in practice. While a practitioner may have an idea about the first derivative magnitude (or other measures of variation such as Lipschitz Bound)
of the process under  study, it is extremely
rare to know something about the $(n+1)$th derivative of the process or even believe it exists! To
illustrate this point consider the bounded function $f(x)=(1+x^2)^{-1}$ for which derivatives of
all orders are available and suppose $n$ equally-spaced points are available for interpolation. One can show
\[\lim_{n \rightarrow \infty} \max_{x \in [-5,5]} |f(x)-p_n(x)| = \infty,\]
(  \cite{book-numerical-cheney}). This means as more data become available, the accuracy of the interpolation
gets worse! The reason the Newton/Lagrange polynomial method fails dramatically in this case is the high-order
derivatives of this simple bounded function become very large for some $x \in [-5,5]$ (or else Equation
\ref{eqn-err-bound-newton} would guarantee a precise bound). Apparently it was expected that a (continuous)
function $f$ will be well-approximated by interpolating polynomials and ``in the history of numerical
mathematics, a severe shock occurred when it was realized that this expectation was ill-founded''
(  \cite{book-numerical-cheney}). Based on the discussion above, we seek methods which make the least possible
assumptions regarding the properties of function $f$. In fact we do not require existence of any derivatives and
only require the function to have a Lipschitz Bound, which is a weaker assumption than that of the
existence of the first derivative. Also we derive the approximation errors merely based on this bound. Later we even relax the existence of a (small) Lipschitz Bound by allowing the function to be well-approximated by
a function with relatively small Lipschitz Bound up to a deviation defined appropriately.

Another view of approximating functions emerged with the least squares method of Gauss which was followed by various other regression methods. The main difference of
this framework (from the interpolation methods previously developed) is to view the function as a combination of a true underlying function and some added noise. For example a version of linear regression assumes $f(x)=\beta_0 + \beta_1 x + \epsilon,$ where $\epsilon$ is independent and identically distributed noise process, for example normally distributed: $\epsilon  \sim N(0,\sigma^2)$.  This can be generalized in many ways to include more predictors or non-linear trends. For example $f(x)=\beta_0 + \beta_1 B_1(x) + \beta_2 B_2(x) + \epsilon,$
for some basis functions  $B_1(x), B_2(x)$. The function can now be seen as an imperfect observation
of a true function and the main objective is to infer about the true function. For example if we predict a value at an observed $x=x_i$ for which $f$ is observed to be $f(x_i)$, using the historical interpolation methods, we get back the same observed value $f(x_i)$, while from the regression, we may get back a completely different value, which is supposed to be closer to the noise-free true value.

A more recent view of fitting functions emerged as the so-called regularization methods (  \cite{book-hastie-tib}). As an example, consider the {\it smoothing splines} method which finds the minimizer of
\begin{equation}
\underset{f}{argmin} \sum_{i=1}^n |f(x_i)-y_i|^2 + \lambda \int_{D} ||f''(x)||dx,
\label{eqn-smoothing-splines}
\end{equation}
where $f''(x)$ is a second derivative and $\lambda\geq0$ is a {\it penalty} term, which creates a trade-off between the deviation  of the fitting (approximation) function $f$ and
the variation of the function. If we do not include the second term: $\lambda \int_{D} ||f''(x)||dx$, we end up with a function that necessarily interpolates them and can have chaotic behavior outside the observed points -- a similar problem to that of interpolation methods of Lagrange and Newton. An appropriate $\lambda$ is usually chosen by cross-validation (e.g.\; \cite{book-hastie-tib}). Solving Equation \ref{eqn-smoothing-splines} can be shown to be equivalent to
\begin{eqnarray}
\underset{f}{argmin} \sum_{i=1}^n |f(x_i)-y_i|^2,\;\;\;\;\;\;\; \mbox {subject to}\;\; \int_D ||f''(x)||dx \leq \lambda^{\star},
\label{eqn-smoothing-splines2}
\end{eqnarray}
for some $\lambda^{\star}\geq0,$ which is determined by $\lambda$. In fact the second representation comes from
the dual problem of the first one in convex optimization theory. Conceptually the regularization methods, such as smoothing splines, do not explicitly assume and model a noise
process but rather prevent the function to vary too much through $\int_D
||f''(x)||dx \leq \lambda^{\star}$.  The framework we use in this paper is similar in this sense and does that by
controlling the variation through the Lipschitz Bound and the deviation by  $\max_{i=1}^n |f(x_i)-y_i|$.  Of
course the solution in the two cases can be dramatically different. One thing the latter achieves is the fit
always remains within the range of the data through controlling the Lipschitz Bound and  the severe deviation
measure of $\max_{i=1}^n |f(x_i)-y_i|,$ which does not let the approximation to deviate from the data much at any
individual point. 

\section{A framework for approximating slow-moving functions}
\label{sect:theory}
Making inference about an arbitrary function with sparse data is not feasible if we do not have any extra
information at our disposal. Here we show that one such useful assumption is having a relatively small {\it Lipschitz Bound} which is defined formally below. Here we consider real functions defined on a subset $D$ of the $d$-dimensional Euclidean space, $\R^d$: $f: D \subset \R^d \rightarrow \R,$
and denote the set of all such functions by $\R^{D}.$ We also consider the supremum norm on this space $\|f\|_{\infty}={\sup}_{x\in D}{|f(x)|},$ which induces a metric (and topology) on $\R^{D}$. We denote the Euclidean norm for a vector $v$ in $\R^d$ by $||v||.$ 
\begin{definition} Suppose $f: D \subset \R^d \rightarrow \R$ is a function.\\
(i) $f$ is said to have a Lipschitz Bound (LB), $m$, if
$|f(x)-f(y)|\leq m||x-y||,\;x,y \in D$. We denote the set of all such functions by $\LB(D,m)$
or $\LB(m)$ when the domain is clear from the context.\\
(ii) For $d=1$, we denote the set of all {\it periodic} functions on $[a,b]$ ($f(a)=f(b)$) and with Lipschitz
Bound $m$
by $\PLB([a,b],m)$ or $\PLB(m),$ if the domain is  clear from the context. \\
(iii) Infimum Lipschitz Bound:
\[Lip(f)=\inf\{m \in \R,\;\;  |f(x)-f(y)|\leq m ||x-y||,\;\;\forall x,y \in D\}.\]
\end{definition}

Many processes do not posses a reasonably small Lipschitz Bound. This is true in the case of the temporal Ozone process
shown in the left panel of Figure \ref{O3_biweekly_UPL_2004_2007.pdf}, which shows that despite the high variation in
the small scale, a slow-moving pattern is present in a larger scale. Here we provide a method to extend the
theory developed before to this case. The idea for this extension lies in the fact that in such cases the process
can be well-approximated by a function with a reasonably small LB. To formally introduce this idea, we
start by  the following definition.
\begin{definition}
Suppose $f:D \subset \R^d \rightarrow \R$. Then $f$ is said to have {\it Lipschitz Bound}, $m$, up to a {\it Bound Deviation} (BD),
$\sigma$, if there exist a function $g$ such that $|f(x)-g(x)|\leq \sigma,\;\forall x\in D$ and $g$ has Lipschitz
Bound $m$. The class of all such functions is denoted by $\ALB(D,m,\sigma)$ or by $\ALB(m,\sigma)$ when the
domain $D$ is clear from the context.
\end{definition}
Note that $f \in \ALB(m,\sigma)$ does not even need to be continuous. Therefore we have
generalized this method to functions which are not continuous but well-approximated (in the sense that $\sigma$
is small) by a continuous function $g$. Similar to the simple case with no deviation, we can define a periodic family for the case with deviations and we denote that by $
\PALB([a,b],m,\sigma)$. Note that this definition lets us include functions which are approximately periodic.
In other words, it allows a function $f$ for which $0 \leq |f(b)-f(a)| \leq \sigma$.

In the 1-dimensional case we can approximate $\LB([a,b],m)$
and $\PLB([a,b],m)$ by piece-wise linear functions with line segments with slope magnitude less than or equal to $m$ as accurately as 
desired. Let $\PL([a,b],m)$ be the set of piece-wise linear functions on $[a,b]$ with slope magnitude less than or equal to $m$ and $\PPL([a,b],m)$ be the set of periodic piece-wise linear functions on $[a,b]$ with slope magnitude less than or equal to $m$. 
Then we have the following lemma. 
\begin{lemma}
$\PL([a,b],m)$ is dense in $\LB([a,b],m)$ and $\PPL([a,b],m)$ is dense in $\PLB([a,b],m)$ in terms of sup norm: $||f||_{\infty}=\underset{x \in D}{\sup} ||f(x)||$.
\label{lemma-pl-lb-approx}
\end{lemma}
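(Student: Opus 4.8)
The plan is to prove density by an explicit construction: for a target $f$ and tolerance $\varepsilon>0$, sample $f$ on a sufficiently fine uniform grid and take the linear interpolant. Concretely, I would fix $f\in\LB([a,b],m)$, pick $n\in\N$, set $\delta=(b-a)/n$ and nodes $t_i=a+i\delta$ for $i=0,\dots,n$, and let $g_n$ be the unique continuous function that equals $f(t_i)$ at each $t_i$ and is affine on every subinterval $[t_{i-1},t_i]$.

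The first thing to verify is that $g_n$ actually belongs to $\PL([a,b],m)$, i.e.\ that all its segment slopes have magnitude at most $m$. On $[t_{i-1},t_i]$ the slope of $g_n$ equals $\bigl(f(t_i)-f(t_{i-1})\bigr)/\delta$, and its magnitude is at most $m$ precisely because $f$ has Lipschitz bound $m$: $|f(t_i)-f(t_{i-1})|\le m|t_i-t_{i-1}|=m\delta$. So membership in $\PL([a,b],m)$ is immediate from the Lipschitz hypothesis.

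The second step is the error estimate. For $x\in[t_{i-1},t_i]$, writing $g_n(x)=f(t_{i-1})+\tfrac{x-t_{i-1}}{\delta}\bigl(f(t_i)-f(t_{i-1})\bigr)$ gives $|g_n(x)-f(t_{i-1})|\le m(x-t_{i-1})$, while the Lipschitz bound gives $|f(x)-f(t_{i-1})|\le m(x-t_{i-1})$; the triangle inequality then yields $|f(x)-g_n(x)|\le 2m(x-t_{i-1})\le 2m\delta=2m(b-a)/n$. Taking $n$ large enough that $2m(b-a)/n<\varepsilon$ forces $\|f-g_n\|_\infty<\varepsilon$, which proves that $\PL([a,b],m)$ is dense in $\LB([a,b],m)$.

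Finally, for the periodic claim I would run the identical construction on a given $f\in\PLB([a,b],m)$. Because the uniform grid has $t_0=a$ and $t_n=b$, the interpolant inherits $g_n(a)=f(a)=f(b)=g_n(b)$, so $g_n$ is periodic, and its slopes still have magnitude at most $m$; hence $g_n\in\PPL([a,b],m)$ and the same bound $\|f-g_n\|_\infty\le 2m(b-a)/n$ applies. I do not expect any genuine obstacle here: the only points that need care are that the linear interpolant lands in the constrained class $\PL$ (resp.\ $\PPL$) rather than merely in $\LB$ (resp.\ $\PLB$) — which is exactly where the Lipschitz bound on $f$ is used — and that in the periodic case the endpoint identity $f(a)=f(b)$ is preserved, which holds automatically since $a$ and $b$ are grid nodes.
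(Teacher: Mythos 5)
Your proof is correct and follows essentially the same route as the paper's (much terser) argument: interpolate $f$ on a fine grid by a piece-wise linear function, note that the Lipschitz bound on $f$ forces the segment slopes into $[-m,m]$, and bound the sup-norm error by a quantity that vanishes as the mesh shrinks. The periodic case is handled the same way, with the endpoint identity preserved because $a$ and $b$ are grid nodes.
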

\begin{proof}
The proof can be done by considering a fine grid on $[a,b]$ and approximating the function on $[a,b]$ by a piece-wise linear function which interpolates the function values on the grid. 
\end{proof}

\begin{figure}
\centering
\includegraphics[width=0.35\textwidth]{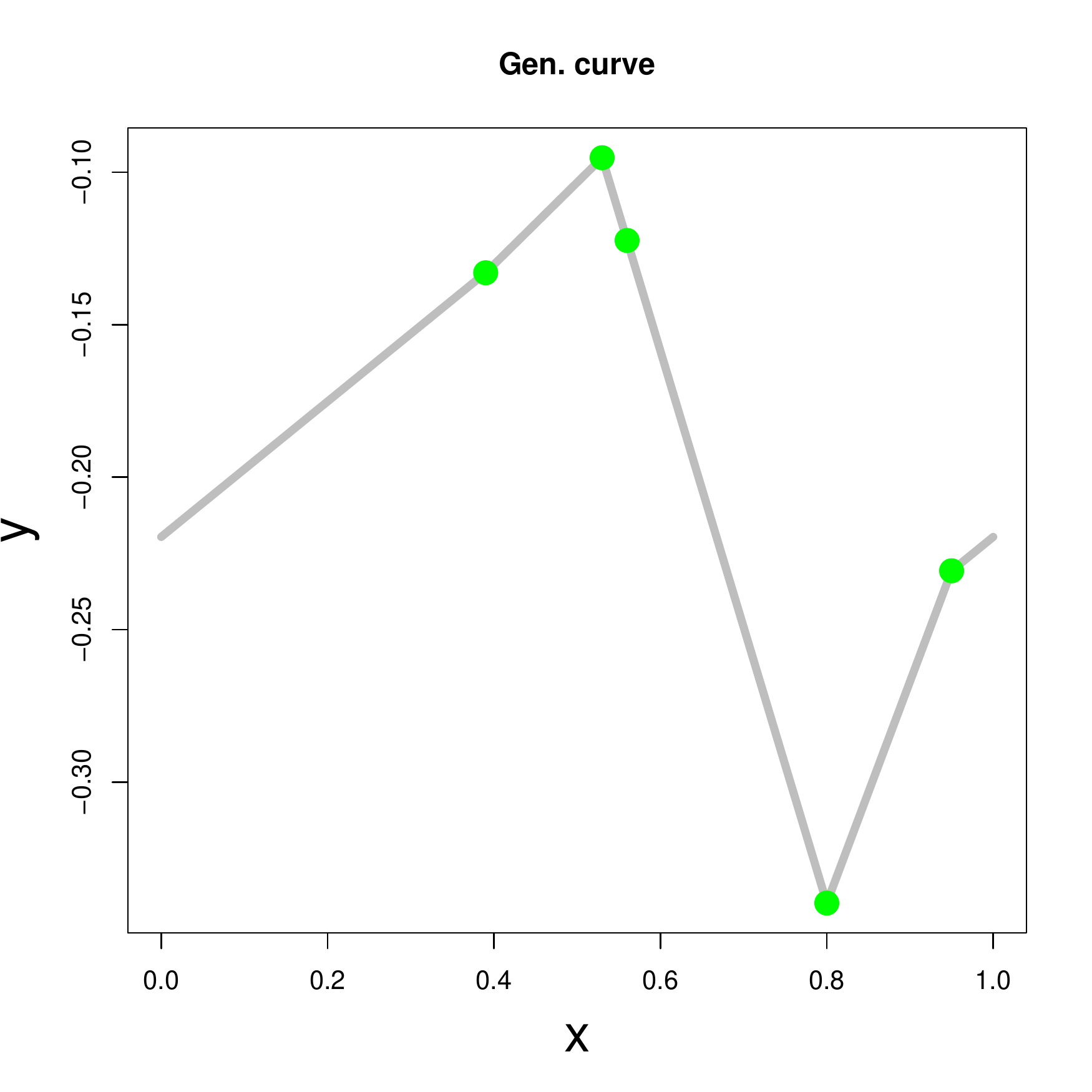}\includegraphics[width=0.35\textwidth]{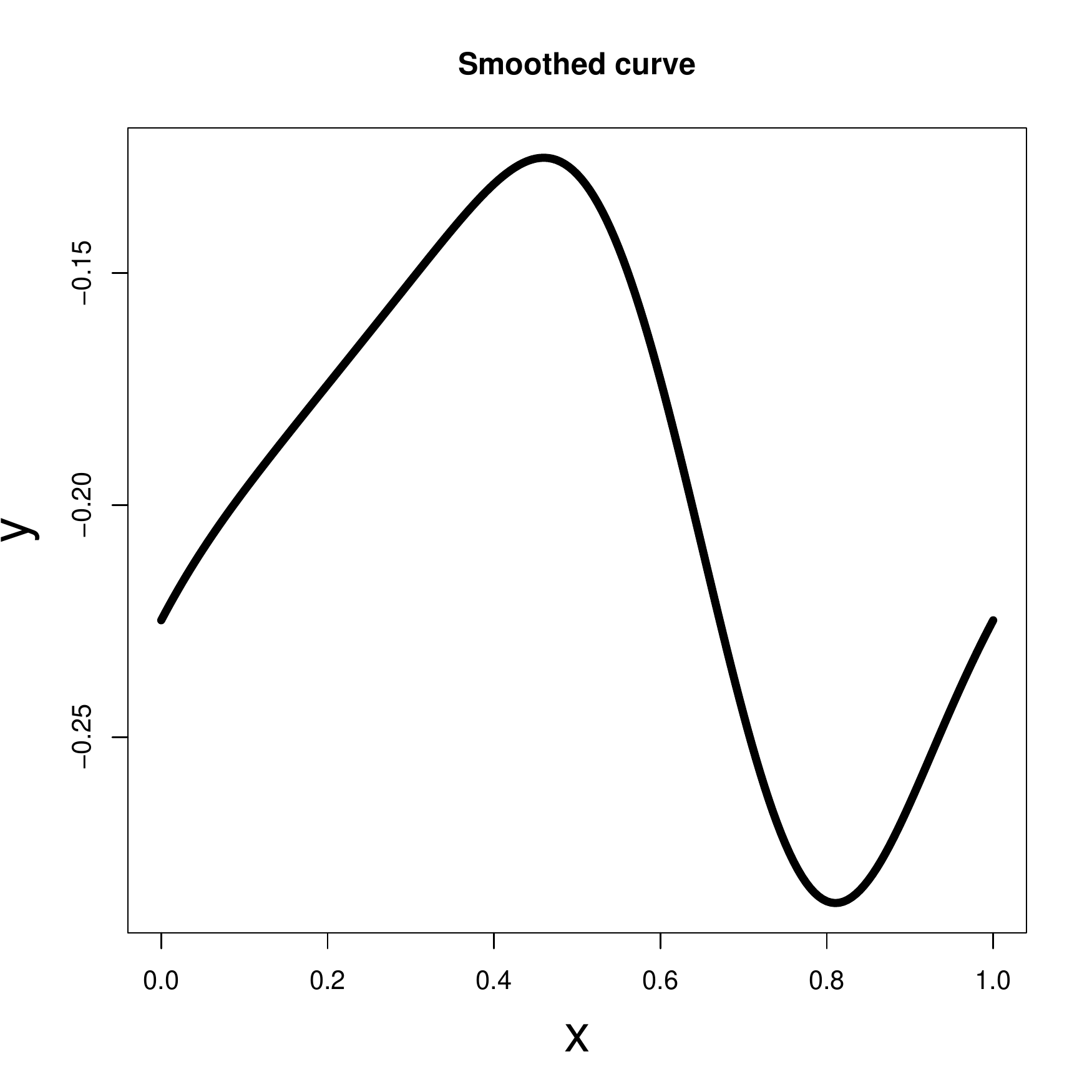}
 \caption{\small Left Panel: A simulated periodic function with 5 break points and with LB equal to 1. Right Panel: A smoothed version of the simulated curve using a moving average filter.}
 \label{per_curve_gen_5points_smooth.pdf}
\end{figure}

We can use Lemma \ref{lemma-pl-lb-approx} to simulate functions in 
$\LB([a,b],m)$ and $\PLB([a,b],m)$.
 An example of the periodic case with 5 break points $p_1,\cdots,p_5$ is given in Figure
\ref{per_curve_gen_5points_smooth.pdf} along with its more smooth version (smoothed to calculating a moving average from the left curve). In order to do several simulations we
need to define random procedures to find the break points and the slopes. Later we use uniform
distributions for both cases but we also make sure the break points are not too close as discussed later.

\subsection{Loss functions}
\label{sect:loss-functions}
We can consider various loss functions to asses the efficiency of the approximation methods.  As we discussed in the
introduction, we can consider two general types of losses: losses for approximating the integral of a function;
and losses for approximating the point-wise values of the function:
\begin{itemize}
\item The integral approximation loss: $IL(f,\hat{f}):=|\int_{D} f(x)dx-\int_{D} \hat{f}(x)dx|.$
\item The point-wise approximation loss, for which two measures can be considered:
\begin{itemize}
\item[(a)] Supremum point-wise loss: $SPWL(f,\hat{f}):=\sup_{x \in D} |f(x)-\hat{f}(x)|,$
\item[(b)] Mean point-wise loss: $MPWL(f,\hat{f}):=\int_{D} |f(x)-\hat{f}(x)|/v(D),$
where $v(D)=\int_{D}1dx.$ 
\end{itemize}
\end{itemize}
The error measures defined above are not scale-free and cannot inform us how much of the variation of the
function is captured using an approximation method. In order to standardize the above error, we can divide them
by the {\it diameter} of $f$, which we define to be $diam(f):=\sup_{D}(f)-\inf_{D}(f).$
Then we define the {\it standardized supremum point-wise loss} to be
\[SSPWL(f,\hat{f})=SPWL(f,\hat{f})/diam(f).\]
It is easy to see that if the approximation method is {\it scale-free}: $g=a+bf \Rightarrow \hat{g}=a+b\hat{f},$
then $SSPWL$  is also scale-free: $SSPWL(g,\hat{g})=SSPWL(f,\hat{f}).$
Any reasonable approximation method and the ones discussed here are scale-free. Also note that $0\leq SSPWL \leq
1$.

Suppose we have a family of functions with LB, $m$, and defined on $D$. We define
the {\it family-standardized supremum point-wise loss} to be
\[FSPWL(f,\hat{f})=SPWL(f,\hat{f})/v(D)m.\]
 Again it is easy to see that if the approximation method is scale-free then $FSPWL$  is also scale-free. It is
clear that $0 \leq FSPWL \leq SSPWL \leq 1$ in general. Similarly we define standardized versions of $IL,\;MPWL$
and denote them by $SIL,\;SMPWL$. We denote their family-standardized versions by $FIL,\;FMPWL$.

\subsection{Approximations and their performance}
\label{sect:interpolation}
Suppose the value of $f:D \subset \R^d \rightarrow \R$ is observed at $n$ points ${\bf x}:=(x_1,\cdots,x_n)$, where each $x_i$ is a column vector of length $d$; it  takes
 values ${\bf y}:=(y_1,\cdots,y_n)$; a LB, $m,$ is available; and we are interested in
approximating $f$ at unobserved points $x$  in $D$. We denote such an approximation by $approx[{\bf x},{\bf y}]$,
which is a function on the same domain as $f$. An {\it approximation method}, $approx[.,.],$ is formally a function
that inputs data and outputs functions: \
\begin{align*}
approx: \cup_{n=1}^n \R^{n\times d} \times \R^n &\;\;\; \rightarrow \;\;\;\;\; \R^{D},&\\
 ({\bf x},{\bf y}) & \;\;\;\mapsto \;approx[{\bf x},{\bf y}],&
\end{align*} where $n$ is the size of the data set. In the previous section,
we introduced loss measures for assessing the distance of a given curve to the target function. This cannot
directly be used to assess the performance of an approximation method, because the true function is not available
in practice. (However $SPWL$ and $IL$ introduced
 previously are useful in simulations where the true curve is known.)

In the following, we introduce approximation (prediction) error measures
which are suitable for comparing approximation methods when the target function is not available. The definition
follows by introducing an {\it ordering} on the set of all approximations and we discuss the connection of this
ordering to the error measures.
\begin{definition}
Suppose we are interested in approximating a function $f:D \subset \R^d \rightarrow \R$, which belongs to a family of
functions $\F$. For example $\F=\LB(m)$ or $\F=\PLB(m)$. Also assume $f$ is observed on ${\bf
x}=(x_1,\cdots,x_n)$ and takes values ${\bf y}=(f(x_1),\cdots,f(x_n))$.\\
(i) We define the {\it data-informed supremum point-wise error} to be:
\[DSPWE(approx,{\bf x},{\bf y})=\sup_{f \in \F, f({\bf x})={\bf y}} SPWL(f,approx[{\bf x},{\bf y}]).\]
(ii) We define the {\it supremum point-wise error} to be:
\[SPWE(approx,{\bf x})=\sup_{f \in \F} SPWL(f,approx[{\bf x},{\bf f(x)}]).\]
\end{definition}
Similarly we can define: the {\it data-informed mean point-wise error}, $DMPWE$; the {\it mean point-wise error}, $MPWE$;
the {\it data-informed integral error}, $DIE$; the {\it integral error}, $IE$. Moreover the family-standardized version of the above errors can be obtained by dividing them by $v(D)m$ and we denote them by adding ``F'' to the title, for example $FDSPWE$ is the family-standardized version for $DSPWE$ and so on.

In the sequel, we obtain both $DSPWE$ and $SPWE$ for well-known approximation methods and for the optimal method we
develop here ($Lipfit$). The difference between $DSPWE$ and $SPWE$ is that: $DSPWE$ utilizes the extra information about the
actual values of the function at the observed values to calculate the approximation error. While $SPWE$ only uses
the position of the points for which $f$ is observed: $(x_1,\cdots,x_n)$ and the approximation error is obtained
without using the actual values of the curve $(f(x_1),\cdots,f(x_n))$. Therefore $SPWE$ is useful in the sampling
phase -- when we decide where to observe the values of a function -- in which case we do not have access to the
values of the target function a priori. However when we do have access to the values of the function, we should not
discard that information in assessing the error in the approximation and that is what $DSPWE$  achieves.

\vspace{0.5cm}

\noindent{\bf Point-wise error function}

 The error measures defined above are useful to assess the goodness of
various loss functions on their domains. Since these are defined using the whole domain, we can consider them as
overall measures of error. As we will see various cases of approximation methods have the same overall error
in some situations. However, we can compare these approximation methods point-wise by considering the
{\it point-wise error function}, denoted by $pef$ and defined as follows:
\[pef[approx,{\bf x}, {\bf y}](x)=\sup_{f \in \F, f({\bf x})={\bf y}} |f(x)-approx[{\bf x},{\bf y}](x)|.\]
Note that $pef$ is a function on $D$ in contrast to $DSPWE$ and in fact
\[DSPWE[approx,{\bf x}, {\bf y}]=\sup_{x\in D}pef[approx,{\bf x}, {\bf y}](x).\]

When comparing two approximation methods $approx_1,approx_2$, to show the superiority of $approx_1$ to $approx_2$
(in terms of point-wise error), one ideally wants to show a superiority everywhere on the domain:
\begin{equation}
pef[approx_1,{\bf x}, {\bf y}](x) \leq pef[approx_2,{\bf x}, {\bf y}](x),\;\forall x \in D,
\label{eqn-everywhere-sup}
\end{equation}
from which we can conclude:
\[DSPWE[approx,{\bf x}, {\bf y}] \leq DSPWE[approx_2,{\bf x}, {\bf y}];\]
\[DMPWE[approx,{\bf x}, {\bf y}] \leq DMPWE[approx_2,{\bf x}, {\bf y}].\]
We denote the relation in Equation \ref{eqn-everywhere-sup} by $approx_1 \preceq_{pw} approx_2$. If both
$approx_1 \preceq_{pw} approx_2$ and $approx_2 \preceq_{pw} approx_1$ hold, we write $approx_1 =_{pw} approx_2$.
Also note that this relation is {\it transitive} and {\it anti-symmetric}.

However it is not a {\it total ordering} in general. And there may exist  a pair  $approx_1,approx_2$ for which neither $approx_1
\preceq _{pw}approx_2$ nor $approx_2 \preceq_{pw}approx_1$ is true. In contrast, if we define an ordering using
$DSPWE$  or $DIE$, then clearly we get a total ordering (since the usual ordering of real numbers is a total ordering).
Interestingly, this is one of those rare situations that although the ordering is not a total ordering, there is a solution to the approximation problem
which minimizes the approximation error in terms of $\preceq_{pw}$.

In the case of integral losses, we have a similar contrast between $DIE$ and $IE$ as before: $DIE$ is a
better measure of error, while $IE$ is useful when the values of the function are not available. However, for approximating integrals, there is no point-wise error function version
similar to $pef$,  since by definition we are interested in integrals.

Below we formally define various
approximation methods. Important examples of the (1-d) approximation methods are:
\begin{enumerate}
\item {\bf Average ($AVG$)}: $AVG[{\bf x},{\bf y}](x):=\frac{1}{n}\sum_{i=1}^n {f(x_i)}$. 

\item {\bf Nearest Neighbor ($NN$)}: to every point, assigns the value of $f$ at the closest available point.
\item {\bf Periodic Nearest Neighbor ($PNN$)}: This is a variation of $NN$ to use the assumed periodicity in $f$. We add two
points to $x_1,\cdots,x_n$: $x_n^*=x_n-(b-a)$ and $x_1^*=x_1+(b-a)$. Note that by the periodicity assumption
$f(x_1^*)=f(x_1)$ and
    $f(x_n^*)=f(x_n)$. Then we apply the $NN$ method to the points
    \[(x_n^*,f(x_n^*)),(x_1,f(x_1)),\cdots,(x_n,f(x_n)),(x_1^*,f(x_1^*)).\]
\item {\bf Linear Interpolation ($LI$)}: This method draws line segments between each pair of points
$(x_i,f(x_i)),(x_{i+1},f(x_{i+1})),\;i=1,\cdots,(n-1)$ and approximates $t \in [x_i,x_{i+1}]$ by the corresponding
value on the line. For points $[a,x_{1}]$ and $[x_{n},b]$, we assign the nearest neighbor value.
\item {\bf Periodic Linear Interpolation ($PLI$)}:  This can be defined similarly to $PNN.$ 

\item {\bf Regression} 

\item {\bf Regularization:} Examples of regularization methods are smoothing splines and LOESS.
The $Lipfit$ method developed in this work can also be considered as a regularization method as we discuss later.
\end{enumerate}
The above approximation methods, except for the last two, are data-range faithful, i.e.\;the approximated function is in the range of the data. While, regression methods and the regularization methods in general are not data-range faithful,
$Lipfit$ which can also be viewed as a regularization method is data-range faithful. 
Also the above approximation methods can be immediately extended to multidimensional input space, except for $LI$ and the periodic cases.

\begin{figure}
\centering
\includegraphics[width=0.33\textwidth]{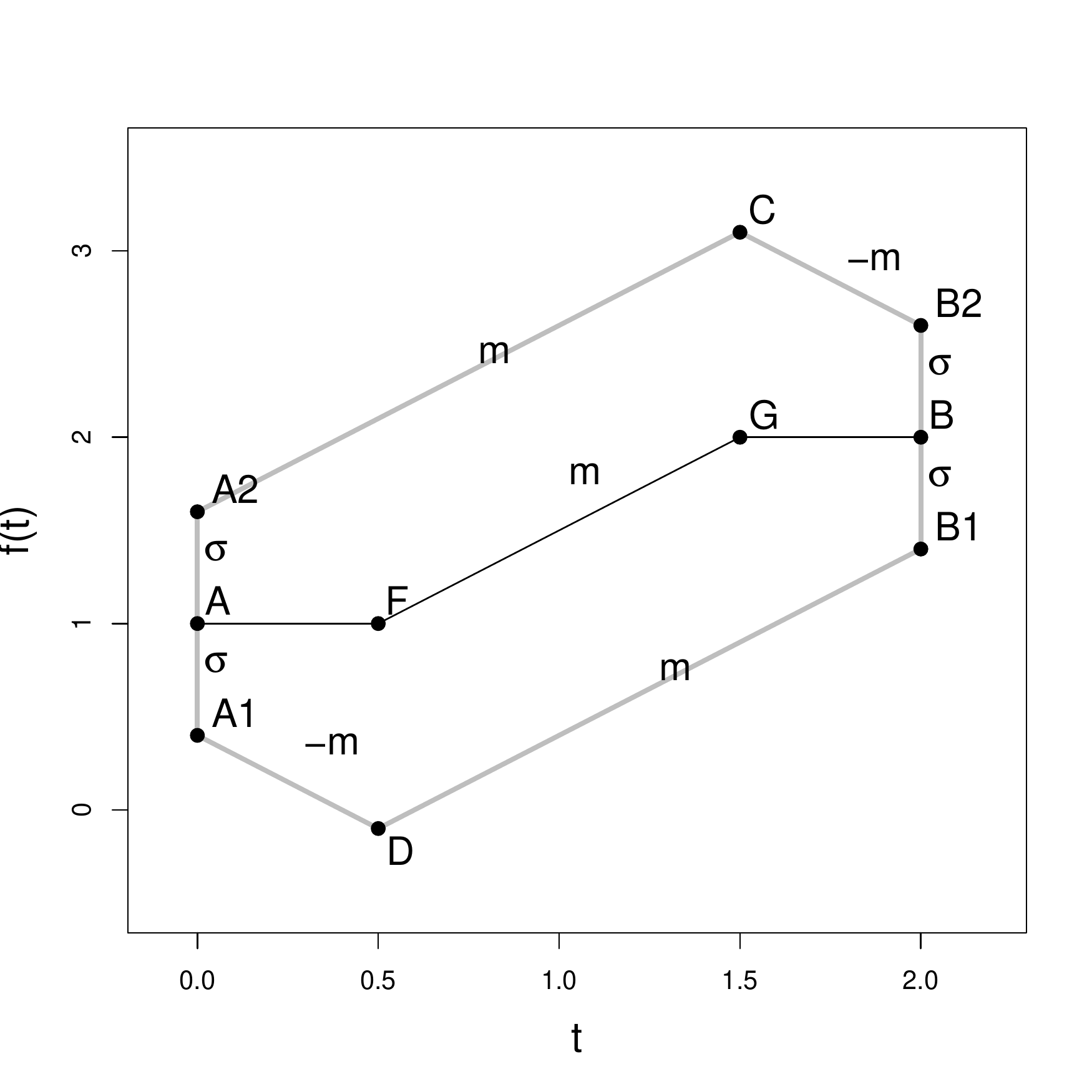}\includegraphics[width=0.33\textwidth]{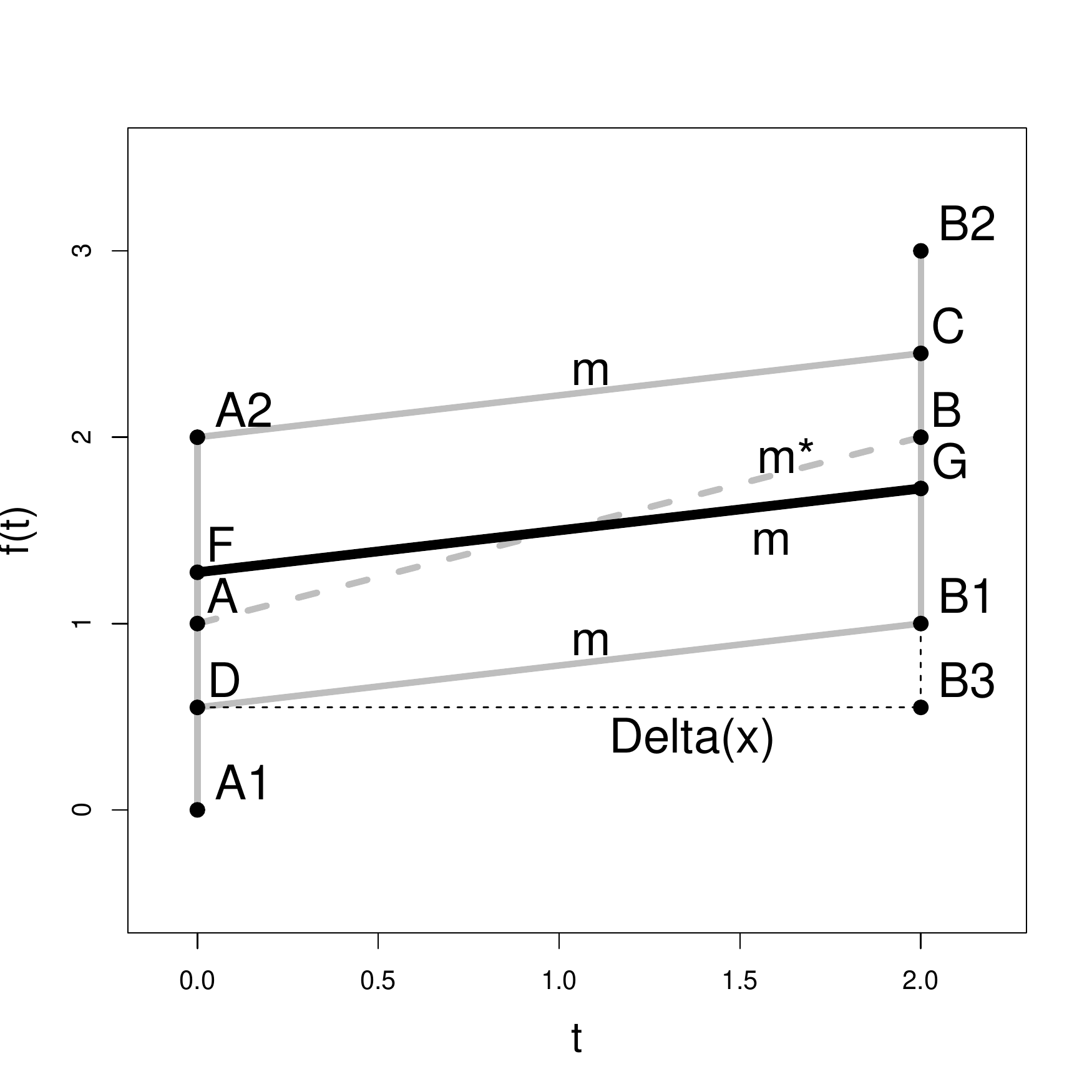}
\includegraphics[width=0.33\textwidth]{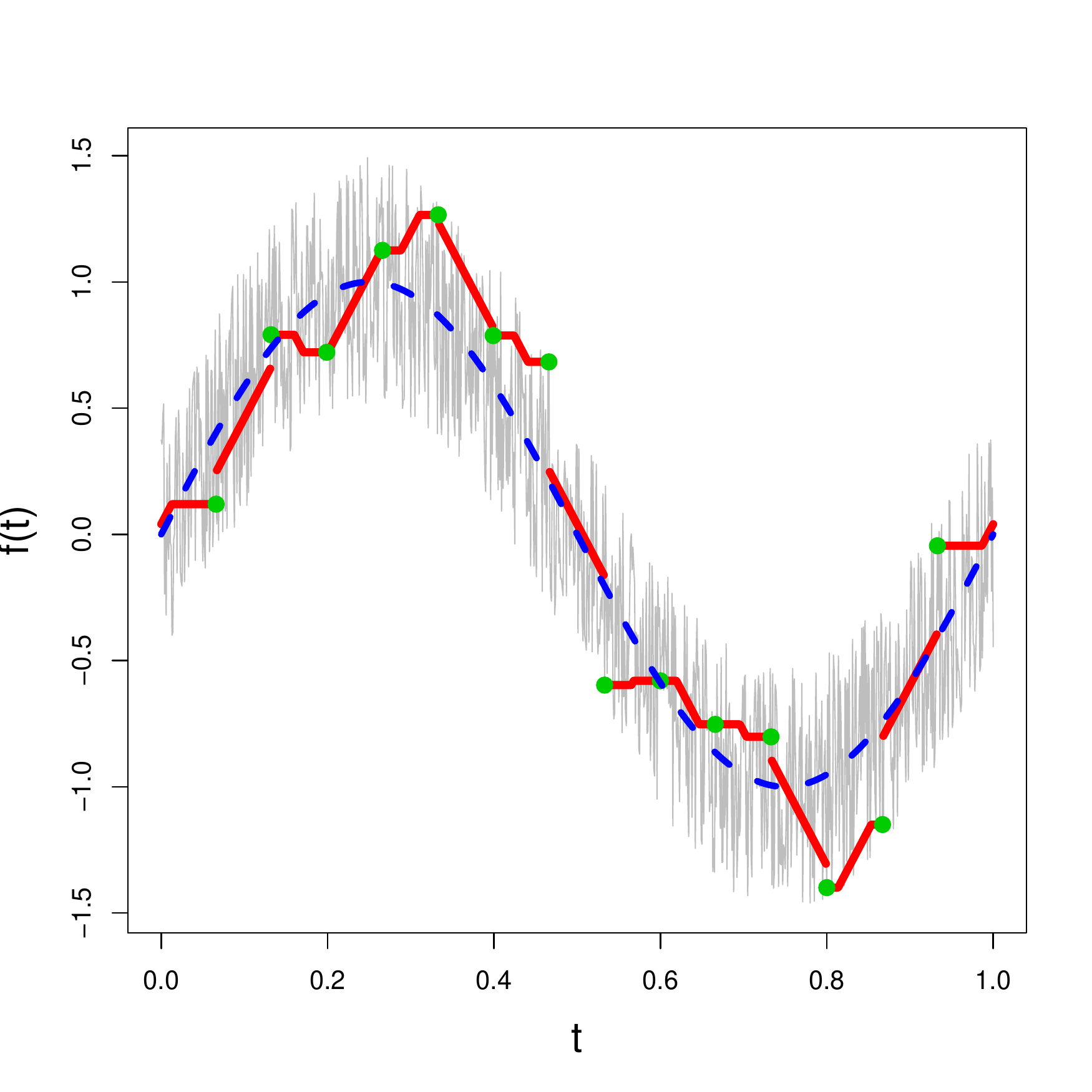}
 \caption{\small Left Panel: $Lipfit$ method with deviation ($|m^{\star}| \leq m$, Case 1). Middle Panel: $Lipfit$ method with deviation,
  ($|m^{\star}| > m$, Case 2). Right Panel: The curve (grey) is generated by adding uniform deviations from the interval $[-\sigma,\sigma]$ to a curve with given LB. The data (filled circles) are fitted with  $Lipfit$ method (thick curve). }
 \label{opt_proof_with_error.pdf}
\end{figure}

\vspace{0.5cm}

\noindent{\bf Lipfit for (1-d) functions:}
Suppose $f:[a,b] \rightarrow \R,$ belong to a $\ALB(m,\sigma)$ and its trajectory contains the two points $A=(x_A,y_A)$ and 
$B=(x_B,y_B)$: $f(x_A)=y_A,\;f(x_B)=y_B,$ where $x_A,x_B$ are two consecutive points in the data available to us (no data is available between $x_A$ and $x_B$).   Also define  $A_1=(x_A,y_A-\sigma), A_2=(x_A,y_A+\sigma)$ and $B_1=(x_B,y_B-\sigma), B_2=(x_B,y_B+\sigma)$. From each of $A_1,A_2,B_1,B_2$ draw lines with slopes $m,-m$. Let $m^{\star}=(y_B-y_A)/(x_B-x_A)$ be the slope of the line segment $AB$.
\begin{itemize}
\item Case 1:  If $|m^{\star}|\leq m$ then the method is defined as before (Figure \ref{opt_proof_with_error.pdf}, left panel).\\
These lines will intersect to form the 6-sided polygon $A_1A_2CB_2B_1D$. Then consider the functions: $H^{upper}$ on $[x_A,x_B]$ be the function which goes along the line segments $A_2C$ and $CB_2$; $H^{lower}$ on $[x_A,x_B]$ be the function which goes along the line segments $A_1D$ and $DB_1$. The $Lipfit$ on $(x_A,x_B)$ is equal to $(H^{upper}+H^{lower})/2$   and goes along the line segments $AF$, $FG$, $GB$.\\
 \begin{itemize}
 \item Thus the $Lipfit$ solution can be identified as follows.\\  Let $\Delta=(\Delta_x-|\Delta_y/m|)/2,$  $F=(x_A+\Delta,y_A)$
 and $G=(x_B-\Delta,y_B)$. Then $Lipfit$ goes along the line segments $AF$, $FG$, $GB$.
\end{itemize}
\item Case 2: If $|m^{\star}|> m$ as shown in Figure
\ref{opt_proof_with_error.pdf} (middle panel).\\
These lines will intersect to form the parallelogram, $DA_2CB_1$. Then consider the functions: $H^{upper}$ on $[x_A,x_B]$ be the function which goes along the line segment $A_2C$; $H^{lower}$ on $[x_A,x_B]$ be the function which goes along the line segment $DB_1$. The $Lipfit$ on $(x_A,x_B)$ is equal to $(H^{upper}+H^{lower})/2$   and goes along the line segment $FG$.\\
\begin{itemize}
\item Thus the $Lipfit$ solution can be identified as follows.\\ 
Define $\Delta'=\Delta_x(|m^{\star}|-m)/2$ and the points
\[F=(x_A,y_A+sign(m^{\star}))\Delta',\;\;G=(x_B,y_B-sign(m^{\star}))\Delta'.\]
Then the $Lipfit$ method is given by the line segment $FG$.
\end{itemize}
\end{itemize}
In the above we observe that  the value of BD is not needed for applying the $Lipfit$ method. The $Lipfit$ method is data-range faithful, i.e.\;the approximated curve is in the range of the
data. Moreover, the $Lipfit$ method is an interpolation method in the sense that on
the observed points returns the observed values. However it may have sudden discontinuity at the observed values.The $Lipfit$ method can also be generalized to multidimensional case easily as we discuss below.

\vspace{1cm} \noindent {\bfseries Lipfit for general (multidimensional) case:}
Suppose a function $f$ is given at ${\bf x}=(x_1,\cdots,x_n)$ where each $x_i$ is a column vector of length $d$ denoting a point in $D \subset \R^d$, with values equal to
${\bf y }=(y_1,\cdots,y_n)$. Then suppose we are interested to approximate $f$ at a  point  $x \in D$. Applying the Lipschitz Bound to $x$ and $x_i$ for $i=1,\cdots,n$, we get
\[|f(x)-f(x_i)| \leq m|| x-x_i || + \sigma \Rightarrow   f(x_i) -  m|| x-x_i || - \sigma \leq f(x) \leq f(x_i) +  m|| x-x_i || + \sigma,\]
from which we conclude
\[H^{lower}(x) \leq f(x) \leq H^{upper}(x),\]
where
\begin{eqnarray*}
H^{lower}(x)=\underset{i=1,\cdots,n}{\max}(f(x_i) -  m|| x-x_i ||) + \sigma(1-1_{\{x_1,\cdots,x_n\}}(x)),\\
H^{upper}(x)=\underset{i=1,\cdots,n}{\min}(f(x_i) +  m|| x-x_i ||) - \sigma(1-1_{\{x_1,\cdots,x_n\}}(x)),
\end{eqnarray*}
where $1_{\{x_1,\cdots,x_n\}}(x)=1$ if $x$ is an observed point and zero otherwise.
Then optimal solution which minimizes $|f(x)-\hat{f}(x)|$ is given by
\[\hat{f}(x)=(H^{lower}(x)+H^{upper}(x))/2.\]
Note that $\sigma$ cancels out and we see that for the general solution it also does not appear in the solution (as it was the case for the 1-d case we discussed before).
In order to see the optimality, it is sufficient to note that: (1) both $H^{lower}(x),H^{upper}(x)$ interpolate the data; (2) they belong to $\ALB(m,\sigma)$; (3) it is impossible for $f(x)$
to be outside the range $[H^{lower}(x),H^{upper}(x)]$. Thus we have extended the results in \cite{Sukharev-1978} and \cite{Beliakov-2006} and the $Lipfit$ method is the optimal method in terms of each of $DSPWE, SPWE, DIE, IE$.
  Also note that our solution in 1-d case match this solution and is computationally fast. The point-wise error function can be expressed in terms of $H^{lower}(x),H^{upper}(x)$:
\[pef(x)=(H^{lower}(x)-H^{upper}(x))/2,\]
from which $DSPWE$ and $DIE$ can be calculated for the multidimensional case. For the 1-d case $DSPWE, SPWE, DIE, IE$
can be found in closed-form and are given in the following theorem.

\begin{theorem}
\label{theo-err-bounds-ALB} Suppose a function belongs to $\ALB(m,\sigma)$ with trajectory going through points
$A=(x_A,y_A)$ and $B=(x_B,y_B)$. Denote the slope of the line from $A$ to $B$ by $m^{\star}$. Also define
$\Delta_x= (x_B-x_A),\;\Delta_y=(y_B-y_A),$ and $\Delta=(\Delta_x-|\Delta_y/m|)/2$. Then the data-informed supremum point-wise error,
($DSPWE,$) and supremum point-wise error, $SPWE$, for various methods to approximate the curve
on $[x_A,x_B]$ are given below.
\begin{itemize}
\item  If $|m^{\star}|\leq m$:
\begin{itemize}
\item $DSPWE[NN,(x_A,x_B),(y_A,y_B)]=m{\Delta_x}/{2}+\sigma$.
\item $DSPWE[LI,(x_A,x_B),(y_A,y_B)]=\Delta(m+|m^{\star}|)+\sigma$.
\item $DSPWE[Lipfit,(x_A,x_B),(y_A,y_B)]=\Delta m+\sigma$.
\item $DIE[approx,(x_A,x_B),(y_A,y_B)]=\frac{m^2-{m^{\star}}^2}{4m}(\Delta_x)^2+\sigma \Delta_x,\;approx=NN,LI,Lipfit.$
\end{itemize}
\item  If $|m^{\star}|>m$, define $\Delta'=\Delta_x(|m^{\star}|-m)/2$:
\begin{itemize}
\item $DSPWE[NN,(x_A,x_B),(y_A,y_B)]=m{\Delta_x}/{2}+\sigma$.
\item $DSPWE[LI,(x_A,x_B),(y_A,y_B)]=\sigma$.
\item $DSPWE[Lipfit,(x_A,x_B),(y_A,y_B)]=\sigma-\Delta'$.
\item $DIE[approx,(x_A,x_B),(y_A,y_B)]=(\sigma-\Delta')\Delta_x,\;approx=NN,LI,Lipfit.$
\end{itemize}
\item $SPWE[approx,(x_A,x_B)]=m{\Delta_x}/{2}+\sigma,\;approx=NN, LI, Lipfit$.
\end{itemize}
\end{theorem}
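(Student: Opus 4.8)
The plan is to reduce every quantity to the two envelope functions obtained by specialising the multidimensional bounds to the two data sites $A,B$,
\[H^{upper}(x)=\min\bigl(y_A+m(x-x_A),\,y_B+m(x_B-x)\bigr)+\sigma,\qquad H^{lower}(x)=\max\bigl(y_A-m(x-x_A),\,y_B-m(x_B-x)\bigr)-\sigma,\]
valid on $(x_A,x_B)$: every $f\in\ALB(m,\sigma)$ with $f(x_A)=y_A,\ f(x_B)=y_B$ obeys $H^{lower}(x)\le f(x)\le H^{upper}(x)$, and these bounds are sharp because $H^{upper}$ and $H^{lower}$ are themselves perturbations by $\le\sigma$ of $\LB(m)$ functions through $A$ and $B$, hence lie in $\ALB(m,\sigma)$. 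Thus for any method $g$ built from $A,B$ alone, $pef[g](x)=\max\bigl(H^{upper}(x)-g(x),\,g(x)-H^{lower}(x)\bigr)$, $\DSPWE[g]=\sup_{x\in[x_A,x_B]}pef[g](x)$, and $\DIE[g]=\sup_f\bigl|\int_{x_A}^{x_B}(f-g)\bigr|$. A case split follows the envelope geometry: for $|m^{\star}|\le m$ (Case 1) both $H^{upper}$ and $H^{lower}$ are genuinely ``tent shaped'', switching linear pieces at $x_B-\Delta$ and $x_A+\Delta$ respectively, with $\Delta=(\Delta_x-|\Delta_y/m|)/2\ge0$; for $|m^{\star}|>m$ (Case 2) these kinks fall outside $[x_A,x_B]$, so $H^{upper}$ is the single slope-$m$ line through $(x_A,y_A+\sigma)$ and $H^{lower}$ the single slope-$m$ line through $(x_B,y_B-\sigma)$, giving a strip of constant vertical width $2\sigma-2\Delta'$. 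By the reflection $x\mapsto x_A+x_B-x$ we may assume $m^{\star}\ge0$ throughout.

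For $\DSPWE$ I would substitute the explicit piecewise-linear forms of the three methods into $pef$ and read off the maximum: $\NN$ is the step function equal to $y_A$ then $y_B$ with jump at the midpoint $x_m:=(x_A+x_B)/2$; $\LI$ is the chord $AB$; and $\Lipfit=(H^{upper}+H^{lower})/2$, the broken line along $AF$, $FG$, $GB$. For $\NN$ the gap $H^{upper}(x)-y_A=m(x-x_A)+\sigma$ dominates on $[x_A,x_m]$ and is maximal at $x_m$, giving $m\Delta_x/2+\sigma$; symmetry handles $[x_m,x_B]$, and this holds in both cases. For $\Lipfit$ the envelope width is constant, $2m\Delta+2\sigma$ on $(x_A+\Delta,x_B-\Delta)$ in Case 1 and $2\sigma-2\Delta'$ throughout in Case 2, and dominates the two boundary triangles, so $\DSPWE[\Lipfit]=m\Delta+\sigma$, resp.\ $\sigma-\Delta'$. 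For $\LI$ in Case 1 the lower gap $\LI(x)-H^{lower}(x)$ peaks at $x_A+\Delta$ and the upper gap $H^{upper}(x)-\LI(x)$ peaks at $x_B-\Delta$, and the identity $(m+m^{\star})\Delta=(m-m^{\star})(\Delta_x-\Delta)=\Delta_x(m^2-{m^{\star}}^2)/(2m)$ shows both equal $(m+m^{\star})\Delta+\sigma$; in Case 2 both gaps are $\le\sigma$ with equality at an endpoint. This yields all the $\DSPWE$ formulas.

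For $\DIE$ the key observation is that $\int_{x_A}^{x_B}g=\frac{1}{2}(y_A+y_B)\Delta_x$ for each of $g=\NN,\LI,\Lipfit$ (trapezoid rule for $\LI$, symmetric step for $\NN$, a short direct computation for $\Lipfit$), and since $\Lipfit$ is pointwise the midline of $H^{upper}$ and $H^{lower}$ this common value equals $\frac{1}{2}\bigl(\int H^{lower}+\int H^{upper}\bigr)$, i.e.\ the midpoint of the range of $\int f$. Hence $\DIE[g]=\frac{1}{2}\bigl(\int H^{upper}-\int H^{lower}\bigr)=\int_{x_A}^{x_B}pef[\Lipfit]$ for all three, and integrating $pef[\Lipfit]$ (two boundary triangles of base $\Delta$ plus a central rectangle in Case 1, a single rectangle in Case 2), using $\Delta(\Delta_x-\Delta)=\Delta_x^2(m^2-{m^{\star}}^2)/(4m^2)$, gives $\frac{m^2-{m^{\star}}^2}{4m}(\Delta_x)^2+\sigma\Delta_x$, resp.\ $(\sigma-\Delta')\Delta_x$. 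For $\SPWE$ one drops the conditioning on $f(x_A),f(x_B)$, so these values, and hence each method's output, vary with $f$. In every case $\SPWE[g]\le m\Delta_x/2+\sigma$: for $\NN$ because each point is within $\Delta_x/2$ of its nearest data site and $|f(x)-f(x_i)|\le m|x-x_i|+\sigma$; for $\LI$ by writing $f(x)-\LI(x)=(1-\lambda)(f(x)-f(x_A))+\lambda(f(x)-f(x_B))$ with $\lambda=(x-x_A)/\Delta_x$ and bounding each term, the extreme weighting being $\lambda=1/2$; for $\Lipfit$ because $|f(x)-\Lipfit(x)|$ is at most half the envelope width, whose maximum over the interval ($m\Delta+\sigma$ in Case 1, $\sigma-\Delta'$ in Case 2) is $\le m\Delta_x/2+\sigma$, with equality exactly when $m^{\star}=0$. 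The bound is attained for all three by taking $f(x_A)=f(x_B)$ and a function spiking by $m\Delta_x/2+\sigma$ at $x_m$ while staying within $\sigma$ of a fixed slope-$m$ line, so that it lies in $\ALB(m,\sigma)$.

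I expect the main obstacle to be the bookkeeping: tracking which linear piece of $H^{upper}$, of $H^{lower}$, and of each method is active on each of the subintervals $(x_A,x_A+\Delta)$, $(x_A+\Delta,x_B-\Delta)$, $(x_B-\Delta,x_B)$ in Case 1 (with the degenerate analogues in Case 2), and keeping the signs correct after the reduction to $m^{\star}\ge0$. The algebra is elementary, but the number of small subcases is where an error would most plausibly creep in. A secondary point needing care is certifying the sharpness of each supremum by exhibiting the extremal $f$ --- a perturbation of size $\le\sigma$ of $H^{upper}$, $H^{lower}$, or of a constant or linear function --- and checking that it genuinely lies in $\ALB(m,\sigma)$ and interpolates the prescribed data.
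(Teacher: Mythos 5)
Your proposal is correct and takes essentially the same route as the paper: the paper's own proof is only the remark that the result ``can be done geometrically'' via the envelope construction shown in its figure (details deferred to the technical report), and your argument --- trapping every admissible $f$ between $H^{upper}$ and $H^{lower}$, splitting on $|m^{\star}|\leq m$ versus $|m^{\star}|>m$, and reading off the suprema and integrals of the gap functions --- is exactly that geometric argument written out in full. Note only that, like the paper's own envelope formulas, you use the one-sided bound $|f(x)-f(x_i)|\leq m\,|x-x_i|+\sigma$ rather than $+2\sigma$, which tacitly assumes the approximating $\LB(m)$ function agrees with $f$ at the observed points; since the theorem's stated constants are built on the same convention, your proof matches the intended statement.
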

\begin{proof}
The proof can be done geometrically and the idea of the proof is given in Figure \ref{opt_proof_with_error.pdf}. For a detailed proof see \cite{hosseini-2013-sparse-tech}.
\end{proof}

For each of the $LI$ and $NN$ methods, we introduced periodic versions: $PLI$ and $PNN$ respectively and the same
can be done for the $Lipfit$ method which we denote by $PLipfit$. Again it is true that $PLipfit$ is uniquely optimal when
$n$ points are available for a periodic function in terms of $\preceq_{pw}$ and therefore in terms of $DSPWE$ as
well as $DMPWE$.

\vspace{0.5cm}

\section{Simulation studies}
\label{sect:simulation} This section uses simulations to investigate approximating functions in the framework
developed in this paper for the 1-dimensional domain case. Lemma \ref{lemma-pl-lb-approx} is used here for simulating appropriate functions to compare approximation methods in the 1-d case.
  \cite{hosseini-2013-sparse-tech} compare the methods for the case without deviation and when the deviation is negligible, showing that the $Lipfit$ method is superior to other methods (LI, NN, regression) when the data are sparse in terms of $DSPWE, DIE$. Moreover it is shown that if the functions being simulated are periodic, the periodic version of the methods $PLI, PNN, PLipfit$ improve the prediction error (integral or point-wise) significantly with $PLipfit$ reaching the smallest errors. Here  we perform simulations
for functions that are generated with a given Lipschitz Bound and a non-negligible deviation which is generated from a uniform distribution. This is a special case, because in general the deviations can also have some remaining patterns. However even in this special case, we show that the performance of different approximation methods depend on the magnitude of the deviation and the data sparsity structure. Some remaining work in this area include the multidimensional domain case and the case with more complex deviations which are beyond the scope of this paper.

\subsection{The effect of BD magnitude on method performance} \label{sect:estimating-m-eps}
Here we perform some simulations to study the effect of the magnitude of the Bound Deviation  (BD) on the
method performance. We compare these methods: (1) $Lipfit$ with the same $LB$ the curves were simulated from; (2) $Lipfit$ with  LB
larger than the one the curves were simulated from (denoted by $Lipfit.big$); (3) $Lipfit$ with LB smaller
than the one the curves were simulated from (denoted by $Lipfit.sm$); (4) $LI$; (5) regularization/regression methods
such as $LOESS$ and smoothing splines (e.g.\;see \cite{book-loess-cleveland} and \cite{book-hastie-tib}).

For simulating the curves, we pick LB=10 with 5 break points. Also the distance between each pair of the break
points is taken to be at least 1/(5+2). For the sampling scheme, we consider a sparse data case: From each of
[0,1/4], (1/4,1/2], (1/2,3/4], (3/4,1], we take 2 points uniformly at random. Therefore there are 8 points
available from [0,1] and there is some assurance to cover the whole interval due to the sampling scheme. In
contrast to the previous simulations for which we assume BD to be very small, here we consider larger BDs to
study the effect of its magnitude. Figure \ref{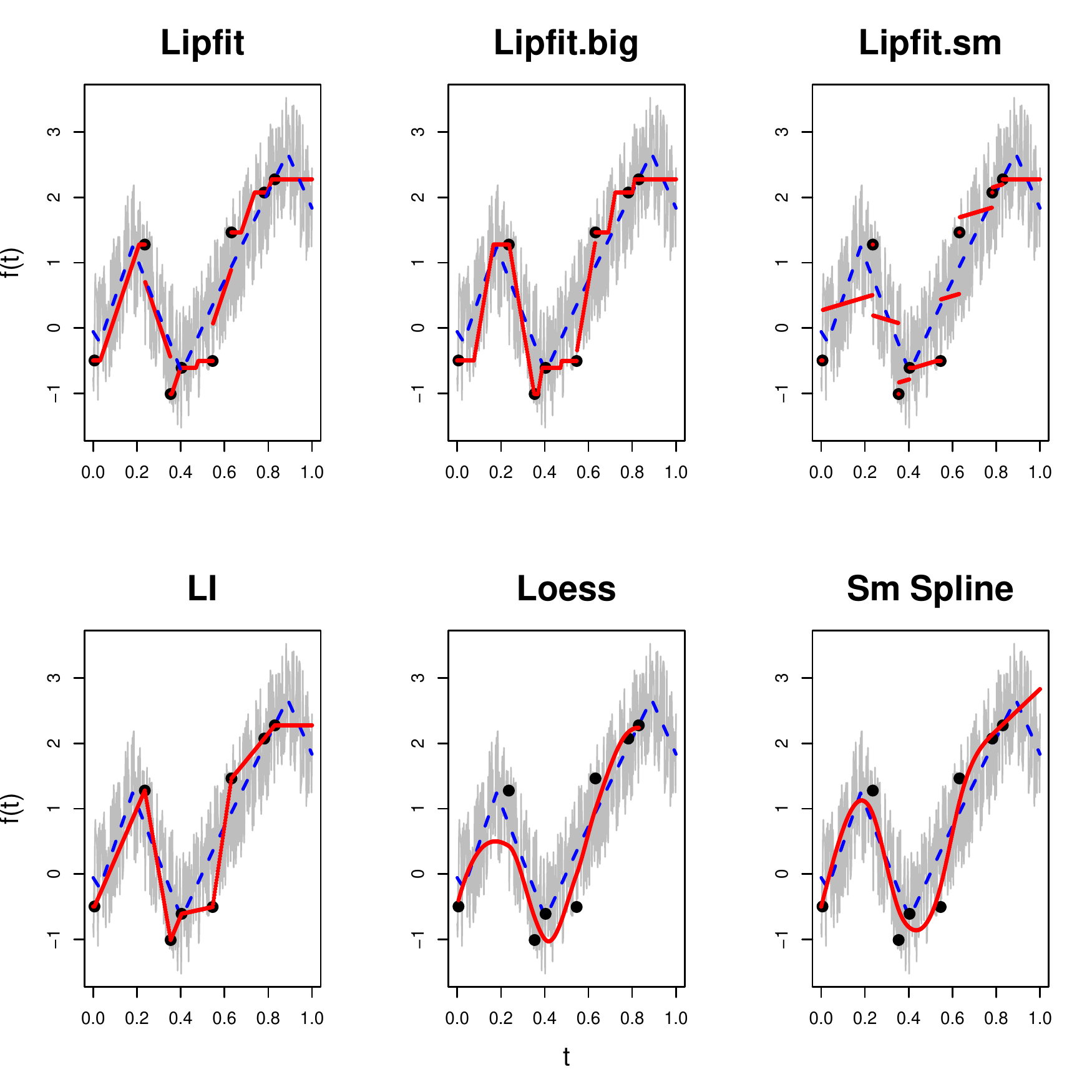} depicts the fits of
various methods to the 8 points for one out of 1000 simulations for BD=1.

To investigate the method performance dependence on the BD magnitude, we consider two cases: Case 1, BD=0.5;
Case 2, BD=1.5. Figure \ref{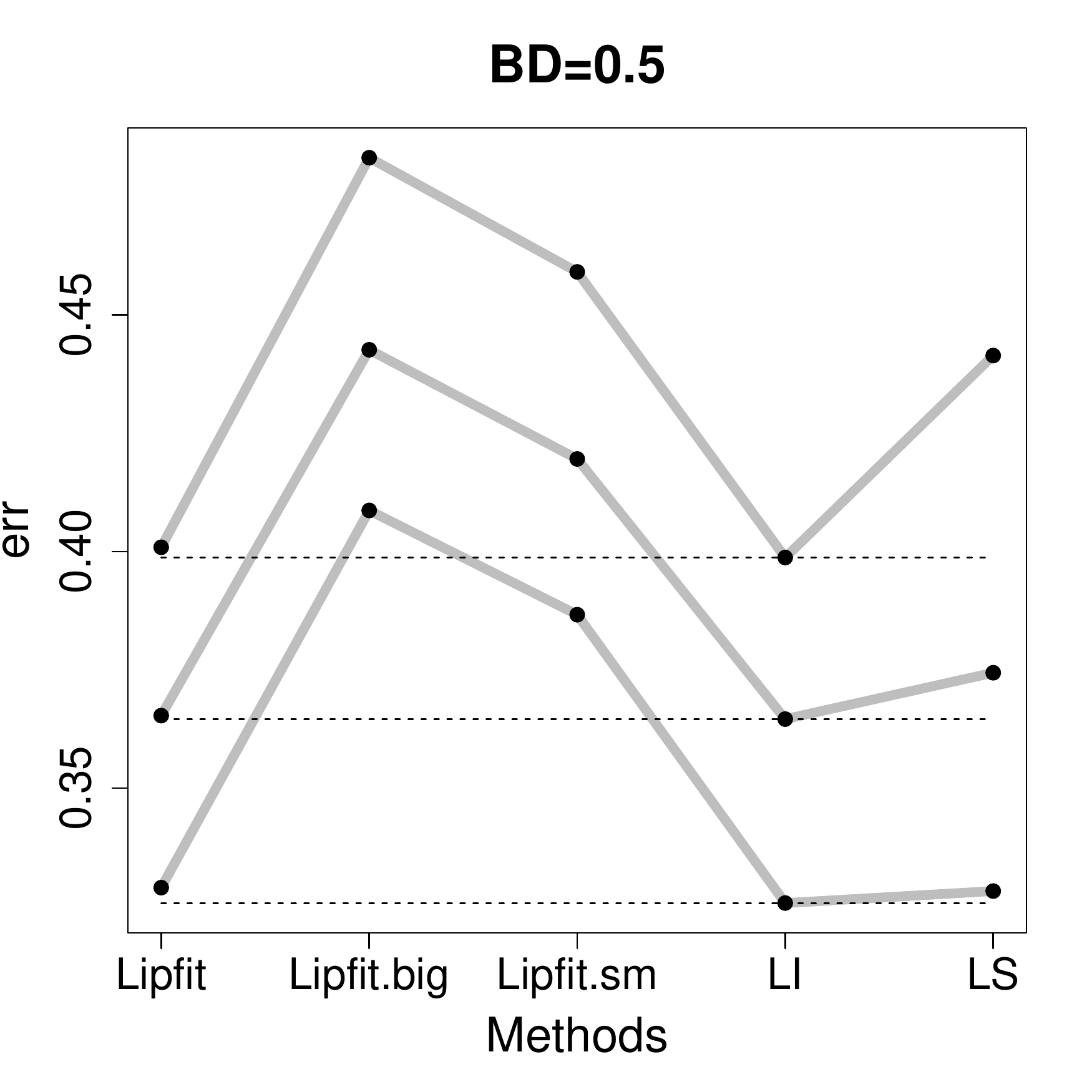} presents $(25\%, 50\%, 75\%)$ quantiles of the $MPWL$
for the methods (1) $Lipfit$; (2) $Lipfit.big$; (3) $Lipfit.sm$; (4) $LI$; (5) $LOESS$. 

Now we summarize the
results. In Case 1, where BD=0.5 and smaller than Case 2, we observe that the methods $Lipfit$ and $LI$ perform
almost equally well and outperform the other methods. In Case 2, in contrast to Case 1, we observe that $Lipfit$ and $Lipfit.sm$ perform almost equally well,
outperforming $LI$ in particular. In both cases $Lipfit.big$ performs poorly since assuming a too big $Lipfit$ will make the approximation tend to the $NN$ method which is a poor method. The intuition that $LI$ is performing better in contrast to $Lipfit.sm$ in Case 1 and this is reversed in Case
2 is as follows: In Case 1 the BD is relatively small and therefore joining the available points using the $LI$
method does not introduce a large approximation error; while in Case 2 it could introduce a large error. In
contrast $Lipfit.sm$ works by moderating the slope of the joining line between two available data points too
aggressively -- especially in Case 1 -- believing much of the slope is due to the deviation (BD) rather than a pattern
(LB). This is because $Lipfit.sm$ is supposing a much smaller LB, (LB=1), than the one the curve was generated from (LB=10). Finally note that if we choose smaller number of data points, for example $5$ points, similar results are
obtained and the LOESS method inferiority to the other methods is magnified. We do not include those simulations
here for brevity.
\begin{figure}
\centering
\includegraphics[width=0.55\textwidth]{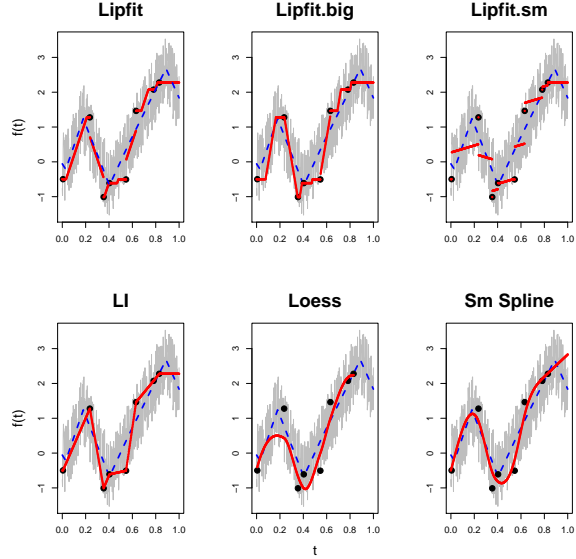}
 \caption{\small The fits using various methods using 8 available points are given where the
 target function is given grey and the fits are given in dark. The deviation-free generated curves are given with dashed lines. The simulations are done by using 5 break points and
 LB=10,\;BD=1.}
 \label{Lipfit_method_with_err_simulation_sig1.pdf}
\end{figure}
\begin{figure}
\centering
\includegraphics[width=0.3\textwidth]{compare_methods_sig_half.pdf}\includegraphics[width=0.3\textwidth]{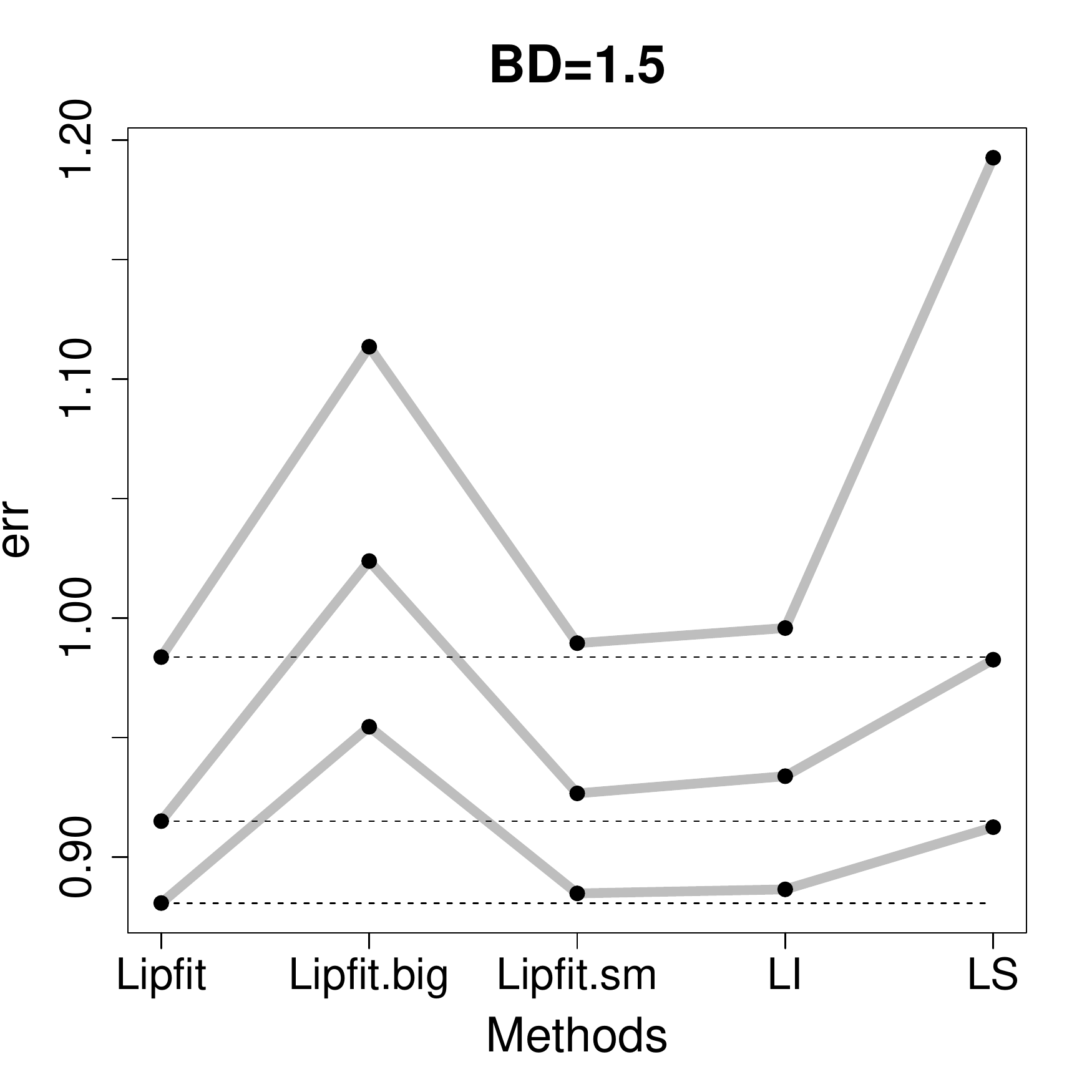}
 \caption{\small The plots depict the $(25\%,50\%,75\%)$ quantile of the MPWL
 by various method for two cases: left panel is for BD=0.5; right panel is for BD=1.5}
 \label{compare_methods_sig_half.pdf}
\end{figure}
\subsection{The effect of data sparsity on method performance} \label{sect:estimating-m-eps}
This subsection further investigates the data size and sparsity effect on the method performance. In the previous
sections, we showed that when the data are sparse over all the interval of interest, the $Lipfit$ method performs
better in contrast to $NN,\;LI$ and standard smoothing methods. We also studied the effect of using a too big
LB or too small LB in the data sparse case and for various magnitudes of error. Here we consider two new
cases: (1) The data is dense over all the interval; (2) the data size is large however, the data is sparse in
some sub-intervals due to non-uniformity of the data locations. We call such data ``locally sparse''. For both
cases, we simulate curves with 5 break points and with LB=10,\;BD=0.5.
(Case 1): From each of the intervals [0,1/4],(1/4,1/2],(1/2,3/4],(3/4,1], we take 10 points uniformly at random.
(Case 2): From each of the interval [0,1/4], we sample 50 points uniformly at random and only one point from
each of the intervals (1/4,1/2],(1/2,3/4],(3/4,1], uniformly at random.

Figure \ref{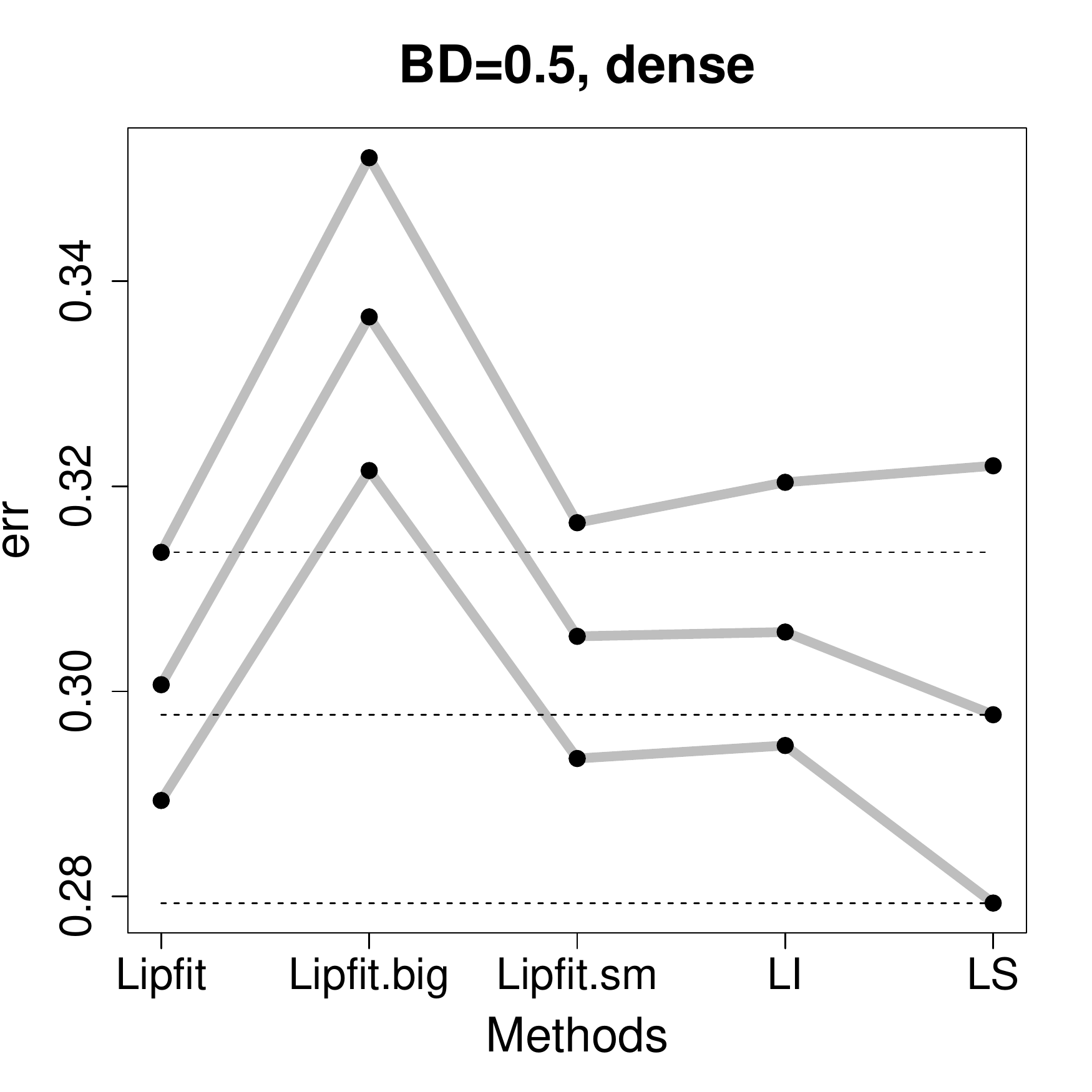} depicts the error quantiles for the two cases and we summarize the
results as follows. In the data dense case (Case 1, left panel), the smoothing method ($LOESS$) has performed optimally for the lower
quantiles but still inferior to the $Lipfit$ method with the correct or small LB in higher quantiles.
 In the locally sparse data case (Case 2, right panel), we observe that the result is almost identical to the
data sparse case over the entire interval. In other words having a large data set is not necessarily going to
change the results if the data is still sparse in large sub-intervals and the smoothing methods will continue to
perform poorly.
\begin{figure}
\centering
\includegraphics[width=0.3\textwidth]{compare_methods_sig_half_ss40.pdf}\includegraphics[width=0.3\textwidth]{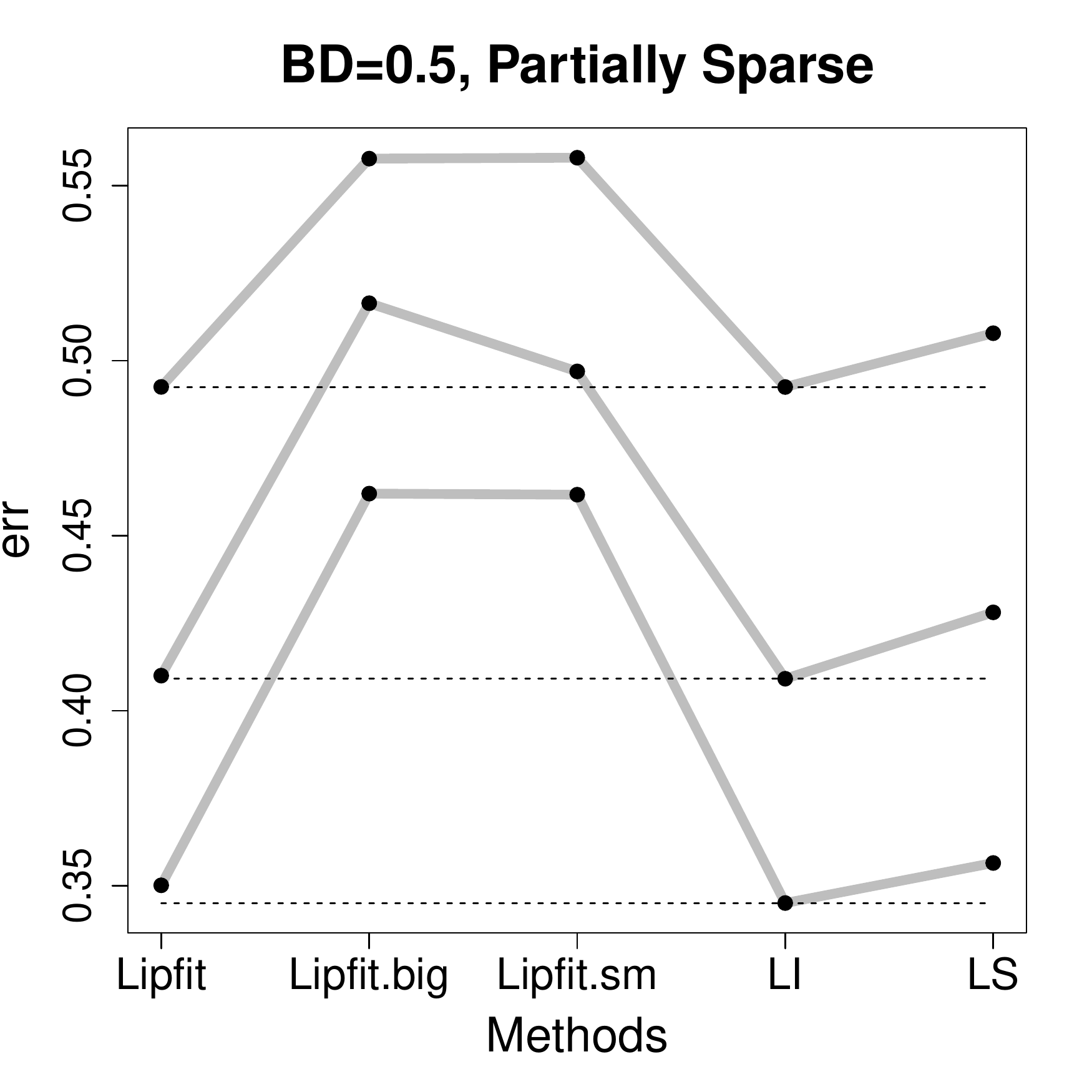}
 \caption{\small The figure depicts $(25\%,50\%,75\%)$ quantiles of the MPWL for two cases. (Left Panel) $BD=0.5$ and dense data of size 40.
 (Right Panel) $BD=0.5$ with data size 53 and non-uniform data, dense in some interval $[0,1/4]$ with 50 data points
  and one data point in each of (1/4,1/2],(1/2,3/4],(3/4,1].}
 \label{compare_methods_sig_half_ss40.pdf}
 \end{figure}

\section{The variation-deviation (LB-BD) trade-off}
\label{sect:LB-BD-trade-off}
For a given function $f:D \rightarrow \R$, LB and BD are not unique. In fact for any BD, $\sigma \in
\R^{\geq 0}$ one can find, LB, $m \in \R^{\geq 0}$ (non-negative numbers) such that $f \in \ALB(m,\sigma)$. This is
the motivation for the following definition.
\begin{definition}
Suppose $f:D\subset \R^d \rightarrow \R$. Then the LB-BD function (curve) associated with $f$ -- denoted by $\gamma_f$ -- is defined as follows: $\gamma_f:\R^{\geq 0} \rightarrow \R^{\geq 0};$
\[\gamma_f(m)=\inf \{\sigma\; |\; f \in \ALB(D,m,\sigma)\}.\]
\end{definition}
We can also consider an ``inverse" for $\gamma_f,\; \gamma^{-1}_f:\R^{\geq 0} \rightarrow \R^{\geq 0};$
\[\gamma^{-1}_f(\sigma)=\inf \{m\; |\; f \in \ALB(D,m,\sigma)\}.\]
We call the inverse also the LB-BD curve by slight abuse of naming. In Figure \ref{sin_LB_BDE_curve_CVX_analytic} (Right Panel) the LB-BD curve for
the function $f(x)=\sin(2\pi x)$ is given. The LB for $f$ is equal to $2\pi$. However if we allow for a deviation of $\sigma$, as depicted by the grey curves (Left Panel), there is
a function inside the area defined the grey curves which has a smaller LB. We can also define a LB-BD curve for the periodic case as follows.
\begin{definition}
Suppose $f:[a,b] \rightarrow \R$. Then the periodic LB-BD curve associated with $f$, 
$\gamma_f^p:\R^{\geq 0} \rightarrow \R^{\geq 0},$ is defined as follows:
\[\gamma_f^p(m)=\inf \{\sigma\; |\; f \in \PALB([a,b],m,\sigma)\}.\]
\end{definition}
In the following lemma, we give the LB-BD curve for some simple functions.
\begin{lemma}
Below we give the LB-BD curve, $\gamma_f^{-1}$, for various functions $f:[a,b] \rightarrow \R$.
\begin{itemize}
\item[(a)] $f(x)=mx:\;\;\;\gamma_f^{-1}(\sigma)=\max\{0,m-2\sigma/(b-a)\}.$
\item[(b)] $f(x)=m|x-(b-a)/2|:\;\;\;\gamma_f^{-1}(\sigma)=\max\{0,m-4\sigma/(b-a)\}.$
\item[(c)] $f(x)=\sin(2\pi x)$ and $[a,b]=[0,1]$ then $\gamma_f^{-1}(\sigma)=|2\pi\cos(2\pi x_B)|,$ where
$x_B$ is the unique solution of the equation (Figure \ref{sin_LB_BDE_curve_CVX_analytic}):
\begin{equation}
\sin(2 \pi x_B) - 2\pi(x_B-1/2)\cos(2\pi x_B)-\sigma=0,\; 1/4 \leq x_B \leq 1/2. \label{eqn-sin-LB-BD}
\end{equation}
 
\end{itemize}
\label{lemma-LB-BD-examples-analytic}
\end{lemma}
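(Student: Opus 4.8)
The plan is to handle all three parts through one observation. For $f\in\R^{[a,b]}$ and $\sigma\ge 0$, there is a function $g$ with $Lip(g)\le L$ and $\|f-g\|_\infty\le\sigma$ if and only if $L\ge 0$ and $L\ge\sup_{x\ne y}\big(f(x)-f(y)-2\sigma\big)/|x-y|$: for ``if'' take $g(z)=\sup_x\big(f(x)-\sigma-L|x-z|\big)$, the full-interval analogue of the envelopes $H^{lower},H^{upper}$ already used to build $Lipfit$ in Section~\ref{sect:theory}; for ``only if'' apply the Lipschitz inequality of $g$ to $x$ and $y$. Hence
\[
\gamma_f^{-1}(\sigma)=\max\Big\{\,0,\ \sup_{x\ne y}\frac{|f(x)-f(y)|-2\sigma}{|x-y|}\,\Big\},
\]
so for each of the three functions it remains to evaluate this supremum; I would in each case produce a lower bound from one well-chosen pair $x\ne y$ and a matching upper bound from one explicit competitor $g$.

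Parts (a) and (b) are short. For $f(x)=mx$ the ratio equals $m-2\sigma/|x-y|$, maximal at $|x-y|=b-a$, with competitor the ``shaved'' line through $(a,ma+\sigma)$ and $(b,mb-\sigma)$; when $m-2\sigma/(b-a)<0$ the constant $g\equiv m(a+b)/2$ lies within $\sigma$ of $f$, so the answer is clamped to $0$. For (b) I read $f(x)=m\,|x-(b-a)/2|$ as the symmetric tent with apex at the midpoint of $[a,b]$ (literally so when $a=0$; the shift is absorbed by translation-invariance of $\gamma_f^{-1}$, and a tent with apex at an endpoint reduces to (a)). The lower bound is cleanest via total variation: any admissible $g$ drops by at least $m(b-a)/2-2\sigma$ on each side of the apex, so $TV(g)\ge m(b-a)-4\sigma$, while $TV(g)\le Lip(g)\,(b-a)$. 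The matching competitor is the tent of slope $m-4\sigma/(b-a)$ raised by $\sigma$, which meets $f$ within $\sigma$ at the apex and at both endpoints; once the slope would turn negative a constant again clamps the answer to $0$.

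Part (c) carries the real content. Using the symmetries of $\sin(2\pi x)$ on $[0,1]$ -- even about $x=\tfrac14$ and $x=\tfrac34$, odd about $x=\tfrac12$ -- I would reduce the extremizing pair to one of the form $\{\tfrac12-s,\ \tfrac12+s\}$ with $0<s\le\tfrac14$, on which the ratio is $\big(\sin(2\pi s)-\sigma\big)/s$; differentiating in $s$ gives $\sin(2\pi s)-2\pi s\cos(2\pi s)=\sigma$, and there the ratio equals $2\pi\cos(2\pi s)$, so the substitution $x_B=\tfrac12-s$ turns the equation into \eqref{eqn-sin-LB-BD} and the value into $|2\pi\cos(2\pi x_B)|$. (The reduction to symmetric pairs can be made rigorous by noting that interior stationary points of $(a,b)\mapsto\big(|f(a)-f(b)|-2\sigma\big)/|a-b|$ must satisfy $f'(a)=f'(b)$, which for the sine forces $a+b=1$, plus a routine check that the boundary configurations $a=0$, $b=1$ give strictly smaller values; alternatively one skips this and simply verifies the competitor below.) Uniqueness of $x_B$ is immediate from monotonicity: $\rho(s):=\sin(2\pi s)-2\pi s\cos(2\pi s)$ has $\rho(0)=0$ and $\rho'(s)=4\pi^2 s\sin(2\pi s)>0$ on $(0,\tfrac14)$, so $\rho$ is a strictly increasing bijection of $[0,\tfrac14]$ onto $[0,1]$; thus for $\sigma\in[0,1]$ there is exactly one $x_B\in[\tfrac14,\tfrac12]$ solving \eqref{eqn-sin-LB-BD}, and $\gamma_f^{-1}(\sigma)=0$ once $\sigma\ge 1$ (take $g\equiv 0$). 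To turn the upper bound into a proof I would exhibit the optimal $g$ explicitly: with $L=2\pi\cos(2\pi s)$ let $g(x)=\sin(2\pi x)-\sigma$ on $[s,\tfrac12-s]$, $g(x)=\sin(2\pi x)+\sigma$ on $[\tfrac12+s,1-s]$, $g$ linear of slope $+L$ on $[0,s]$ and on $[1-s,1]$, and $g$ linear of slope $-L$ on $[\tfrac12-s,\tfrac12+s]$. The pieces join $C^1$ because $2\pi\cos(2\pi s)=L$; one checks $|g'|\le L$ throughout, and $|f-g|$ equals $\sigma$ on the two sinusoidal arcs and runs monotonically between $0$ and $\sigma$ on the three segments, so $\|f-g\|_\infty=\sigma$. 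Geometrically the middle segment is the common tangent line, through $(\tfrac12,0)$, to the lower tube $f-\sigma$ on the left arc and to the upper tube $f+\sigma$ on the right arc -- and ``the tangent at $x_B$ reaches height $\sigma$ at $x=\tfrac12$'' is precisely \eqref{eqn-sin-LB-BD}.

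The main obstacle is geometric, not algebraic: one must see that the binding constraint lives at the \emph{steepest} feature of $\sin(2\pi x)$ -- the interior zero-crossing at $x=\tfrac12$ and its mirror image -- and not across the long monotone arc from the peak at $\tfrac14$ to the trough at $\tfrac34$, and that the extremal $g$ shaves exactly $\sigma$ off there symmetrically. A point that is easy to miss is that $[0,1]$ is an interval and not a circle, so the neighborhoods of the boundary zero-crossings $x=0$ and $x=1$ are one-sided and therefore never extremal; checking this is exactly what forces the reflection $a+b=1$ and singles out $x=\tfrac12$.
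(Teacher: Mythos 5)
Your proof is correct, and it reaches the paper's answers by a genuinely different route. The paper argues case by case and geometrically: for (c) it constructs the candidate curve (follow $f-\sigma$, switch to the common tangent segment through $(1/2,0)$, then follow $f+\sigma$), reads off its Lipschitz constant as the slope of the tangent line $BC$, and proves optimality by contradiction -- any competitor $h$ must satisfy $h(x_B)\ge f(x_B)-\sigma$ and (after a WLOG on the sign) $h(1/2)\le 0$, so the chord over $[x_B,1/2]$ is already too steep; parts (a) and (b) are deferred to the technical report. You instead begin with a general identity,
\[
\gamma_f^{-1}(\sigma)=\max\Bigl\{0,\ \sup_{x\ne y}\bigl(|f(x)-f(y)|-2\sigma\bigr)/|x-y|\Bigr\},
\]
proved by the two-point Lipschitz inequality in one direction and the envelope $g(z)=\sup_x(f(x)-\sigma-L|x-z|)$ in the other, which reduces all three parts to evaluating a supremum; this is sound (the envelope is a finite sup of $L$-Lipschitz functions, hence $L$-Lipschitz, and the hypothesis is exactly what keeps it inside the tube $f\pm\sigma$). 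What your route buys is uniformity and a reusable closed form tied to the $H^{lower},H^{upper}$ construction already in the paper; what the paper's route buys is the explicit geometric picture of the extremal curve, which you in any case also supply. Your part (c) is complete as written because you follow your announced lower-bound-from-one-pair / upper-bound-from-one-competitor strategy: the symmetric pair $(1/2-s,1/2+s)$ at the critical $s$ gives the lower bound $2\pi\cos(2\pi s)$ directly, and your piecewise curve (whose $C^1$ joins and tube membership check out, and which passes through $(1/2,0)$ precisely because of \eqref{eqn-sin-LB-BD}) gives the matching upper bound -- so the somewhat informal ``stationary points force $a+b=1$'' reduction is not load-bearing. (If you did want the reduction cleanly, the identity $\sin(2\pi x)-\sin(2\pi y)=2\cos(\pi(x+y))\sin(\pi(x-y))$ shows that for fixed $|x-y|$ the numerator is maximized exactly when $x+y=1$.) Two cosmetic points: your uniqueness/monotonicity argument for $x_B$ via $\rho'(s)=4\pi^2 s\sin(2\pi s)>0$ is a genuine addition, since the paper only asserts uniqueness; and your reading of (b) as the tent with apex at the midpoint is the only one under which the stated formula is correct, so flagging the apparent typo in $f(x)=m|x-(b-a)/2|$ is appropriate.
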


\begin{proof} (Lemma \ref{lemma-LB-BD-examples-analytic})
(a) and (b) are easy to show. For a proof see \cite{hosseini-2013-sparse-tech}. 
\begin{itemize}

\item[(c)] Consider the left panel of Figure \ref{sin_LB_BDE_curve_CVX_analytic} for the proof. Let $f(x)=\sin(2\pi
x),$ $f_1(x)=f(x)-\sigma$, $f_2(x)=f(x)+\sigma,\;x\in[0,1].$ Define
\[A=(x_A,y_A)=(1/4,f_1(1/4)),\;C=(1/2,f(1/2)=0),\;E=(3/4,f_2(3/4))\]
Also let $B=(x_B,f_1(x_B))$ be a point on the trajectory of $f_1(x),\;x\in[1/4,1/2]$ such that $BC$ is tangent to
$f_1(x)$ trajectory. Then let $D$ be the symmetric image of $B$ with respect to $C$. Then $DB$ is also tangent to
$f_2$ trajectory by symmetry. Then consider the curve (dashed) that goes along $f_1(x)$ trajectory from $A$ to $B$; then goes along the line
segment $BD$; then goes along the $f_2(x)$ trajectory to reach $E$. We claim that this curve has the minimum
possible LB while satisfying the deviation $\sigma$. First note that such a curve satisfies the deviation, $\sigma,$ and can be extended in the same manner to $[0,1]$ (dashed curve). We denote this curve by $g$. Then
note that $Lip(g)$ is the same when applied to the domain $[x_A,x_E]$ or when applied to $[0,1]$. In fact
$Lip(g)$ equals to slope of $BC$  line which we denote by $l$.
 Therefore it only remains to show no other curve achieves this and obtain a strictly smaller LB on $[x_A,x_E]$.
Suppose $h$ is another curve defined on $[x_A,x_E]$ which satisfies the deviation $\sigma$ and has a smaller LB
than $g$ on $[x_A,x_B]$. Without loss of generality (and by symmetry), we can assume that $h(x_C)\leq y_C$ and we
focus on the $[x_B,x_C]$ interval. (If $h(x_C)>y_C$ we repeat the following proof by focusing on $[x_C,x_D]$.)
Then note that $h$ must satisfy $h(x_B) \geq f_1(x_B)=g(x_B)$ since $h$ satisfies the deviation $\sigma$. Now the
line segment from $(x_B,h(x_B))$ to $(x_C,h(x_C))$ will have a slope more than $l$ and this is a contradiction to
$h$ having a smaller LB. To complete the proof it remains to calculate the magnitude of the slope of the $BC$
line segment. This can be found by letting the derivative of $f_1(x_B)$ equal to the slope of $BC$ for $1/4 \leq
x_B \leq 1/2$ and solve that equation for $x_B$:
\[2\pi\cos(2\pi x_B) = \frac{0-(\sin(2\pi x_B)-\sigma)}{1/2-x_B}.\]
Then we  calculate $|2\pi\cos(2\pi x_B)|$ to get the magnitude of the slope.
\end{itemize}
\end{proof}
\begin{figure}
\centering
\includegraphics[width=0.35\textwidth]{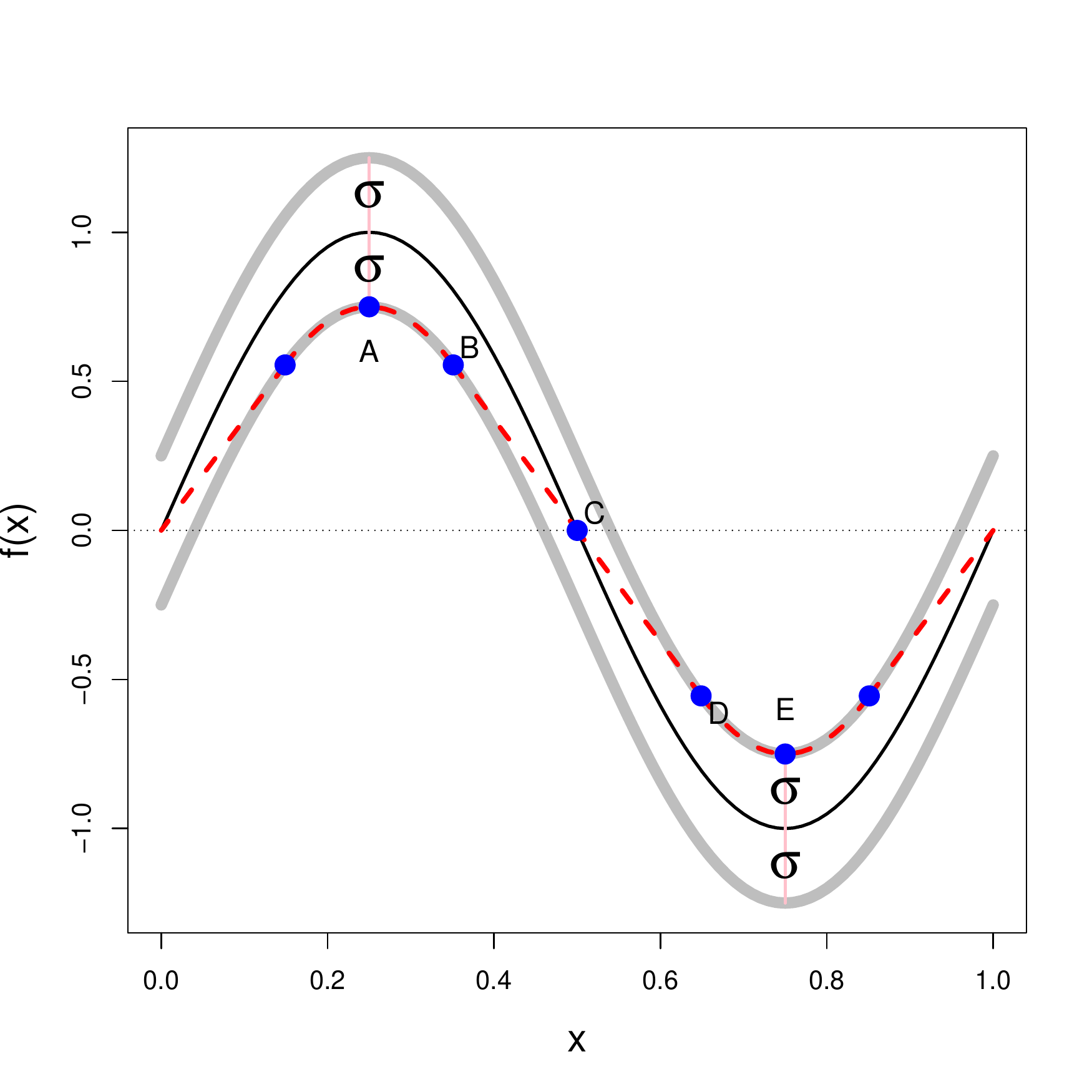}\includegraphics[width=0.35\textwidth]{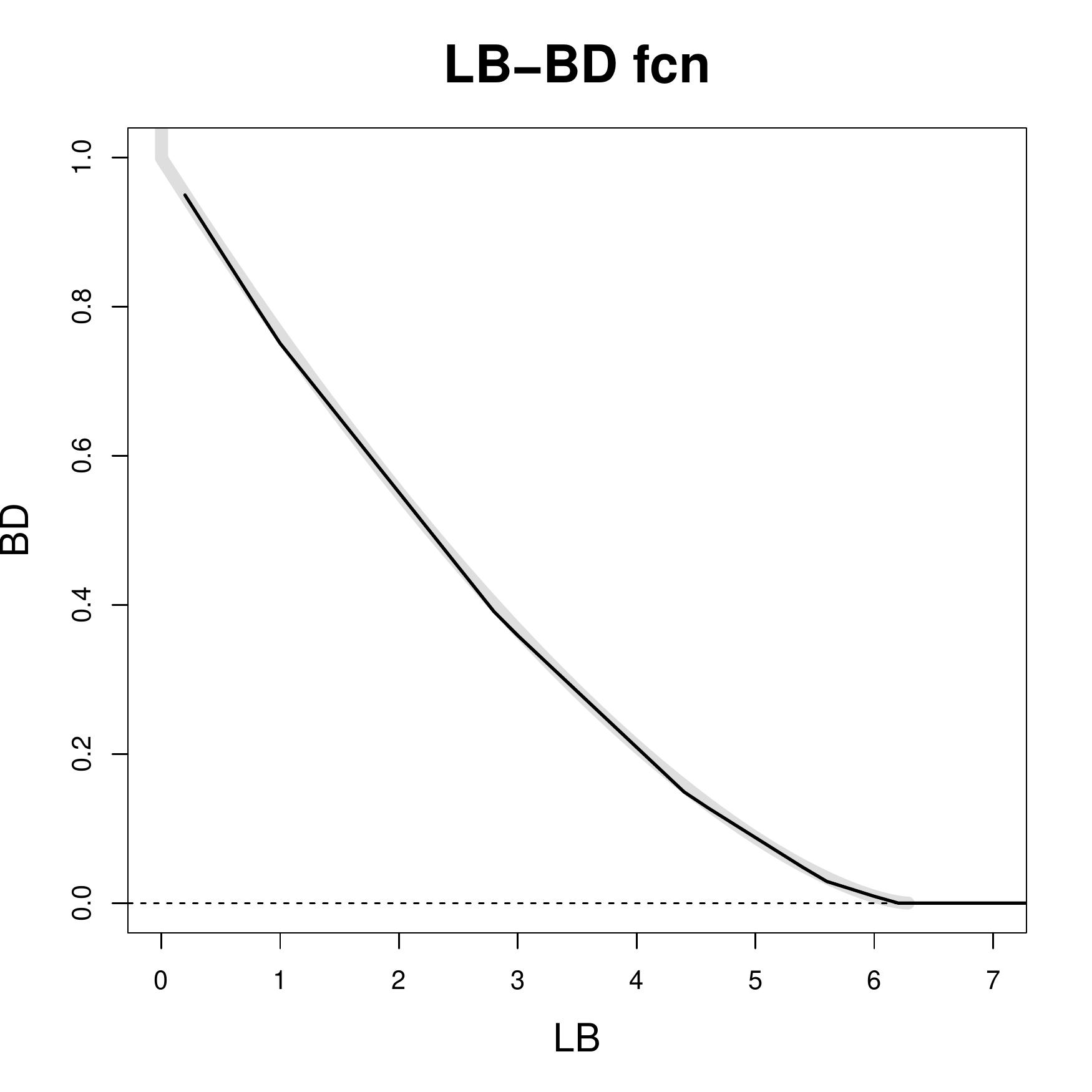}
 \caption{\small (Left Panel): The function $f(x)=sin(2\pi x)$ is give in black with the grey curves mark the boundaries for the curves which deviate from $f$ at most as much
 as $\sigma$. A curve which is inside the boundaries and attains the smallest possible LB is also given.
 (Right Panel): LB-BD curve for $f(x)=\sin(2\pi x),\;x=[0,1]$. Black curve is obtained by analytic solution and
 the grey curve is obtained by solving a convex optimization problem.}
 \label{sin_LB_BDE_curve_CVX_analytic}
\end{figure}

\subsection{Properties of LB-BD function}
This subsection discusses the basic properties of LB-BD function. These properties are useful in providing intuition about the LB-BD curve, as well as calculating
it for given functions.

\begin{lemma}(Elementary Properties of LB-BD function)
Suppose $f:D \rightarrow \R$ is a bounded function and $diam(f)=d$. Then the LB-BD curve of $f$ has the following
properties.
\begin{enumerate}[(a)]
\item $\gamma_f$ and $\gamma_f^{-1}$ are both decreasing functions.
\item $\gamma_f(+\infty)=0,\;\gamma_f(0)=d/2,\;\gamma^{-1}_f(+\infty)=0,\;\gamma^{-1}_f(0)=Lip(f)$.
\item Suppose $f:D \rightarrow \R$ and $f_1:D_1 \subset D \rightarrow \R$ is a restriction of $f$ from domain
$D$ to $D_1\subset D$. Then $\gamma_{f_1}(m) \leq \gamma_f(m),\;m\geq 0$.
\item Suppose $k>0$ and define $f_1(x)=f(kx)$ for $x \in [a/k,b/k]$. Then $\gamma_{f_1}(m) = \gamma_f(m/k).$
\item $\gamma_{kf}(m) = |k| \gamma_f (m/|k|)$.
\end{enumerate}
\label{lemma-elem-LB-BD}
\end{lemma}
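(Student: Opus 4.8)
The plan is to read off all five statements from a single structural fact: the class $\ALB(D,m,\sigma)$ is monotone in each argument, and it transforms in a controlled way under restriction of the domain, affine rescaling of the domain, and affine rescaling of the range. Since $\gamma_f(m)=\inf\{\sigma\mid f\in\ALB(D,m,\sigma)\}$ and $\gamma_f^{-1}(\sigma)=\inf\{m\mid f\in\ALB(D,m,\sigma)\}$ are infima over these index sets, each claim becomes an inclusion (or an equality) between index sets followed by the observation that the infimum over a larger set is no larger. Concretely, for (a): if $m_1\le m_2$ then $\LB(D,m_1)\subseteq\LB(D,m_2)$, hence $\{\sigma\mid f\in\ALB(D,m_1,\sigma)\}\subseteq\{\sigma\mid f\in\ALB(D,m_2,\sigma)\}$ and so $\gamma_f(m_2)\le\gamma_f(m_1)$; the identical argument with $m$ and $\sigma$ interchanged (using $\ALB(D,m,\sigma_1)\subseteq\ALB(D,m,\sigma_2)$ for $\sigma_1\le\sigma_2$) gives that $\gamma_f^{-1}$ is decreasing. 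For (c): restricting a witness $g$ for $f\in\ALB(D,m,\sigma)$ to $D_1\subseteq D$ yields a witness for $f_1\in\ALB(D_1,m,\sigma)$, because a restriction of an $m$-Lipschitz function is $m$-Lipschitz and $\sup_{D_1}|f_1-g|\le\sup_{D}|f-g|$; thus the index set for $f$ sits inside that for $f_1$ and $\gamma_{f_1}(m)\le\gamma_f(m)$.

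For (d) and (e) the tool is a deviation-preserving bijection between candidate-witness sets. In (d), $g\mapsto g_1$ with $g_1(x):=g(kx)$ is a bijection from real functions on $[a,b]$ onto those on $[a/k,b/k]$ which multiplies the Lipschitz constant by $k$ and, by the substitution $u=kx$, preserves sup-norm distance; hence $f_1\in\ALB([a/k,b/k],m,\sigma)$ iff $f\in\ALB([a,b],m/k,\sigma)$, the two index sets coincide, and $\gamma_{f_1}(m)=\gamma_f(m/k)$. In (e) (taking $k\ne0$), $g\mapsto kg$ multiplies both the Lipschitz constant and the sup-deviation by $|k|$, so $kf\in\ALB(D,m,\sigma)$ iff $f\in\ALB(D,m/|k|,\sigma/|k|)$; therefore $\{\sigma\mid kf\in\ALB(D,m,\sigma)\}=|k|\cdot\{\sigma'\mid f\in\ALB(D,m/|k|,\sigma')\}$, and taking infima gives $\gamma_{kf}(m)=|k|\gamma_f(m/|k|)$.

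Part (b) is three endpoint evaluations plus one genuine limit. Since $\LB(D,0)$ is exactly the set of constant functions, $\gamma_f(0)=\inf_{c\in\R}\sup_{x\in D}|f(x)-c|=\tfrac12(\sup_D f-\inf_D f)=d/2$, the midrange $c=\tfrac12(\sup_D f+\inf_D f)$ being optimal. For $\sigma=0$ the only witness allowed is $g=f$, so $f\in\ALB(D,m,0)$ iff $f$ itself has Lipschitz bound $m$, i.e.\ iff $m\ge Lip(f)$; hence $\gamma_f^{-1}(0)=Lip(f)$ (with the value $+\infty$ on both sides when $Lip(f)=\infty$). Using the same midrange constant as a witness, $f\in\ALB(D,0,\sigma)$ for every $\sigma\ge d/2$, so $\gamma_f^{-1}(\sigma)=0$ for all such $\sigma$ and in particular $\gamma_f^{-1}(+\infty)=0$.

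The one nontrivial step, and the one I expect to be the main obstacle, is $\gamma_f(+\infty)=0$, i.e.\ $\gamma_f(m)\to0$ as $m\to\infty$. My plan is to produce explicit near-optimal approximants by Lipschitz (inf-convolution) regularization: set $\underline{f}_m(x):=\inf_{y\in D}\bigl(f(y)+m\|x-y\|\bigr)$. This is $m$-Lipschitz (an infimum of $m$-Lipschitz functions, bounded since $f$ is), and satisfies $\underline{f}_m\le f$ (take $y=x$), so the whole question reduces to a uniform upper bound on $f(x)-\underline{f}_m(x)$. Splitting the defining infimum into $\|x-y\|\ge\delta$ (where $f(y)+m\|x-y\|\ge\inf_D f+m\delta\ge f(x)$ as soon as $m\ge d/\delta$) and $\|x-y\|<\delta$ (where $f(y)+m\|x-y\|\ge f(y)$) reduces everything to the modulus of continuity of $f$ at scale $\delta$, so $\gamma_f(m)\le\sup_{\|x-y\|<\delta}|f(x)-f(y)|$ once $m$ is large. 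This is exactly the place where boundedness alone does not suffice: for a merely bounded discontinuous $f$ the statement fails (e.g.\ $f=\mathbf{1}_{\mathbb{Q}\cap[0,1]}$ has $\gamma_f\equiv 1/2=d/2$), and the honest value of $\gamma_f(+\infty)$ is half the supremal oscillation of $f$. I would therefore prove the last clause of (b) under the (evidently intended) standing hypothesis that $f$ is continuous on a compact domain $D$, hence uniformly continuous, so that the modulus-of-continuity bound forces $\gamma_f(m)\to0$, and I would flag this hypothesis explicitly.
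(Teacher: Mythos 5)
Your proposal is correct, and for parts (a), (c), (d), (e) and for the evaluations $\gamma_f(0)=d/2$, $\gamma_f^{-1}(0)=Lip(f)$, $\gamma_f^{-1}(+\infty)=0$ it is essentially the paper's own argument: monotone inclusion of the index sets for (a) and (c), and the deviation-preserving correspondences $g\mapsto g(k\cdot)$ and $g\mapsto kg$ between witness classes for (d) and (e), which is exactly the substitution the paper carries out in its chain of equalities. (Incidentally, you correctly use the midrange $(\sup_D f+\inf_D f)/2$ as the optimal constant, where the paper's proof has a slip writing $(\sup_D f-\inf_D f)/2$.)

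The one place you genuinely diverge is $\gamma_f(+\infty)=0$. The paper reads this as a direct evaluation at $m=+\infty$: every bounded $f$ trivially satisfies $|f(x)-f(y)|\le \infty\cdot\|x-y\|$, so $f\in\LB(+\infty)$, the witness $g=f$ is admissible, and $\gamma_f(+\infty)=0$ with no continuity assumption. You instead read it as the limit $\lim_{m\to\infty}\gamma_f(m)=0$ and prove that via the inf-convolution $\underline{f}_m(x)=\inf_{y\in D}\bigl(f(y)+m\|x-y\|\bigr)$ under an added uniform-continuity hypothesis. Your counterexample (the indicator of $\mathbb{Q}\cap[0,1]$, for which $\gamma_f(m)=1/2$ for every finite $m$) is right, and it shows the limit reading is false for merely bounded $f$: the limit of $\gamma_f$ is half the supremal oscillation, not zero. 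So your route proves more (a quantitative bound through the modulus of continuity) at the price of an extra hypothesis, while the paper's route is an essentially tautological evaluation at the symbol $+\infty$; both are legitimate, but be aware the paper intends the weaker, evaluative reading, so your added hypothesis is not needed for the statement as the authors mean it.
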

\begin{proof} See Appendix.
\end{proof}
\begin{theorem}(Summation Bound on LB-BD function)
Suppose $f=f_1+f_2$.
\begin{enumerate}[(a)]
\item If $m=m_1+m_2$ where $m_1,m_2\geq 0$ then $\gamma_{f}(m) \leq \gamma_{f_1}(m_1) + \gamma_{f_2}(m_2).$
\item  If $\sigma\;=\;\sigma_1\;+\;\sigma_2\;\;$ where $\sigma_1,\sigma_2\geq 0$ then $\gamma^{-1}_{f}(\sigma) \leq \gamma^{-1}_{f_1}(\sigma_1) + \gamma^{-1}_{f_2}(\sigma_2).$
\end{enumerate}
\label{theorem-decomp-LB-BD}
\end{theorem}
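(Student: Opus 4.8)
The plan is to reduce the theorem to the elementary observation that a sum of Lipschitz functions is Lipschitz with the bounds adding up: if $g_1 \in \LB(D,m_1)$ and $g_2 \in \LB(D,m_2)$, then for all $x,y \in D$,
\[|(g_1+g_2)(x)-(g_1+g_2)(y)| \le |g_1(x)-g_1(y)| + |g_2(x)-g_2(y)| \le (m_1+m_2)||x-y||,\]
so $g_1+g_2 \in \LB(D,m_1+m_2)$. Combined with the triangle inequality for $||\cdot||_\infty$, this lets us assemble a good approximant of $f=f_1+f_2$ out of good approximants of $f_1$ and of $f_2$. I would also record two routine monotonicity facts used implicitly throughout: $\ALB(D,m,\sigma)$ is nondecreasing in $\sigma$ (the same witnessing $g$ serves any larger deviation) and nondecreasing in $m$ (a function with Lipschitz Bound $m$ also has Lipschitz Bound any $m' \ge m$).

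For part (a): if the right-hand side is $+\infty$ there is nothing to prove, so assume $\gamma_{f_1}(m_1),\gamma_{f_2}(m_2) < \infty$ and fix $\varepsilon > 0$. By the definition of $\gamma_{f_i}$ as an infimum together with monotonicity in $\sigma$, we have $f_i \in \ALB(D,m_i,\gamma_{f_i}(m_i)+\varepsilon)$, so we may choose $g_i \in \LB(D,m_i)$ with $||f_i-g_i||_\infty \le \gamma_{f_i}(m_i)+\varepsilon$. Setting $g := g_1+g_2$, the opening observation gives $g \in \LB(D,m_1+m_2)=\LB(D,m)$, while
\[||f-g||_\infty = ||(f_1-g_1)+(f_2-g_2)||_\infty \le ||f_1-g_1||_\infty + ||f_2-g_2||_\infty \le \gamma_{f_1}(m_1)+\gamma_{f_2}(m_2)+2\varepsilon.\]
Hence $f \in \ALB(D,m,\gamma_{f_1}(m_1)+\gamma_{f_2}(m_2)+2\varepsilon)$, so $\gamma_f(m) \le \gamma_{f_1}(m_1)+\gamma_{f_2}(m_2)+2\varepsilon$; letting $\varepsilon \downarrow 0$ proves (a).

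Part (b) is the dual argument. Again assume the right-hand side is finite and fix $\varepsilon > 0$. Using the definition of $\gamma^{-1}_{f_i}$ as an infimum and monotonicity in $m$, we get $f_i \in \ALB(D,\gamma^{-1}_{f_i}(\sigma_i)+\varepsilon,\sigma_i)$, so pick $g_i \in \LB(D,\gamma^{-1}_{f_i}(\sigma_i)+\varepsilon)$ with $||f_i-g_i||_\infty \le \sigma_i$. Then $g_1+g_2 \in \LB(D,\gamma^{-1}_{f_1}(\sigma_1)+\gamma^{-1}_{f_2}(\sigma_2)+2\varepsilon)$ and $||f-(g_1+g_2)||_\infty \le \sigma_1+\sigma_2 = \sigma$, whence $f \in \ALB(D,\gamma^{-1}_{f_1}(\sigma_1)+\gamma^{-1}_{f_2}(\sigma_2)+2\varepsilon,\sigma)$ and therefore $\gamma^{-1}_f(\sigma) \le \gamma^{-1}_{f_1}(\sigma_1)+\gamma^{-1}_{f_2}(\sigma_2)+2\varepsilon$; let $\varepsilon \downarrow 0$.

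There is no substantial obstacle here; the result is essentially the subadditivity of $\gamma_f$ under decomposition of $f$. The only points needing a little care are that the infima defining $\gamma_f$ and $\gamma^{-1}_f$ need not be attained — which is why I work with an $\varepsilon$ of slack and an approximate minimizer rather than an exact optimum — and the (routine) monotonicity of $\ALB(D,m,\sigma)$ in each of its two arguments, which legitimizes the $\varepsilon$-slack step; by Lemma \ref{lemma-elem-LB-BD} the quantities $\gamma_{f_i}(m_i)$ are automatically finite whenever $f_i$ is bounded, so in part (a) one never faces a vacuous right-hand side.
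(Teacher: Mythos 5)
Your proof is correct. Part (a) follows the paper's argument in all essentials: pick near-optimal approximants $g_i\in\LB(m_i)$, add them, and use the triangle inequality for the sup norm together with the additivity of Lipschitz bounds; your version with $\|f_i-g_i\|_\infty\le\gamma_{f_i}(m_i)+\varepsilon$ is in fact the right sign (the paper writes $\sigma_i-\epsilon$, which an infimum cannot guarantee). Part (b) is where you genuinely diverge, to your advantage. The paper insists on exhibiting a witness lying exactly in $\LB(m)$: it forms $g=g_1+g_2\in\LB(m+2\epsilon)$, rescales to $\tilde g=cg$ with $c=m/(m+2\epsilon)$, and then must bound the extra sup-norm deviation $(1-c)\,diam(g_i)$ incurred by the rescaling, which is why it assumes $f_1,f_2$ bounded and carries the diameter bookkeeping through the estimate. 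You instead put the $\varepsilon$-slack into the Lipschitz constant itself, observe that $g_1+g_2$ already certifies $f\in\ALB\bigl(D,\gamma^{-1}_{f_1}(\sigma_1)+\gamma^{-1}_{f_2}(\sigma_2)+2\varepsilon,\sigma\bigr)$, and appeal directly to the infimum defining $\gamma^{-1}_f(\sigma)$. This is cleaner, symmetric with part (a), dispenses with the boundedness hypothesis and the diameter estimates entirely, and avoids the type confusion in the paper's displayed chain (which bounds $\gamma^{-1}_f(m)$ by a sup-norm quantity). Your explicit recording of the monotonicity of $\ALB(D,m,\sigma)$ in each argument is exactly the point that legitimizes both $\varepsilon$-slack steps, and your handling of the non-attained infima is careful and complete.
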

\begin{proof} See Appendix.
\end{proof}
\begin{theorem}
Both $\gamma_f$ and $\gamma_f^{-1}$ are convex functions.
\end{theorem}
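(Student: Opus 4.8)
The plan is to rewrite both $\gamma_f$ and $\gamma_f^{-1}$ as infima of a functional over an admissible set and then exploit the fact that forming a convex combination of two functions interacts well with both the supremum norm and the Lipschitz seminorm. Unwinding the definitions, $\gamma_f(m)=\inf\{\|f-g\|_\infty : Lip(g)\le m\}$ and $\gamma_f^{-1}(\sigma)=\inf\{Lip(g) : \|f-g\|_\infty\le\sigma\}$, where in both cases $g$ ranges over $\R^D$. The two elementary ingredients I would isolate first are: (i) $Lip$ is sublinear, i.e.\ $Lip(\lambda g_1+(1-\lambda)g_2)\le\lambda Lip(g_1)+(1-\lambda)Lip(g_2)$ for $\lambda\in[0,1]$, which is immediate from the triangle inequality applied to the increment of $\lambda g_1+(1-\lambda)g_2$ between two points; and (ii) since $f=\lambda f+(1-\lambda)f$, we have $f-(\lambda g_1+(1-\lambda)g_2)=\lambda(f-g_1)+(1-\lambda)(f-g_2)$, so $\|f-(\lambda g_1+(1-\lambda)g_2)\|_\infty\le\lambda\|f-g_1\|_\infty+(1-\lambda)\|f-g_2\|_\infty$.

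For convexity of $\gamma_f$, I would fix $m_1,m_2\ge 0$ and $\lambda\in[0,1]$, set $m=\lambda m_1+(1-\lambda)m_2$, and, given $\varepsilon>0$, pick $g_i$ with $Lip(g_i)\le m_i$ and $\|f-g_i\|_\infty\le\gamma_f(m_i)+\varepsilon$ (such $g_i$ exist by definition of the infimum; boundedness of $f$ guarantees $\gamma_f(m_i)<\infty$ via Lemma~\ref{lemma-elem-LB-BD}(b) and monotonicity). Then $g:=\lambda g_1+(1-\lambda)g_2$ satisfies $Lip(g)\le m$ by (i), so it is admissible in the infimum defining $\gamma_f(m)$, and by (ii), $\gamma_f(m)\le\|f-g\|_\infty\le\lambda\gamma_f(m_1)+(1-\lambda)\gamma_f(m_2)+\varepsilon$. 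Letting $\varepsilon\downarrow 0$ gives convexity. The argument for $\gamma_f^{-1}$ is symmetric: fix $\sigma_1,\sigma_2,\lambda$, set $\sigma=\lambda\sigma_1+(1-\lambda)\sigma_2$, pick $g_i$ with $\|f-g_i\|_\infty\le\sigma_i$ and $Lip(g_i)\le\gamma_f^{-1}(\sigma_i)+\varepsilon$; then $g=\lambda g_1+(1-\lambda)g_2$ has $\|f-g\|_\infty\le\sigma$ by (ii), hence is admissible, and $\gamma_f^{-1}(\sigma)\le Lip(g)\le\lambda\gamma_f^{-1}(\sigma_1)+(1-\lambda)\gamma_f^{-1}(\sigma_2)+\varepsilon$.

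As a shortcut worth recording, convexity of $\gamma_f$ also follows directly from the already-proved Theorem~\ref{theorem-decomp-LB-BD}(a) together with the scaling identity $\gamma_{kf}(m)=|k|\gamma_f(m/|k|)$ of Lemma~\ref{lemma-elem-LB-BD}(e): writing $f=\lambda f+(1-\lambda)f$ and applying the summation bound with $f_1=\lambda f$, $f_2=(1-\lambda)f$ and $m_1=\lambda m_1'$, $m_2=(1-\lambda)m_2'$ yields $\gamma_f(\lambda m_1'+(1-\lambda)m_2')\le\lambda\gamma_f(m_1')+(1-\lambda)\gamma_f(m_2')$. The analogue of Lemma~\ref{lemma-elem-LB-BD}(e) for the inverse, $\gamma_{kf}^{-1}(\sigma)=|k|\gamma_f^{-1}(\sigma/|k|)$ (a one-line change of variables $g\mapsto g/k$), combined with Theorem~\ref{theorem-decomp-LB-BD}(b) does the same for $\gamma_f^{-1}$.

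I do not anticipate a serious obstacle. The only points needing care are (a) using $\varepsilon$-near-minimizers rather than exact minimizers, since the infima need not be attained, and (b) the degenerate cases where a value is $+\infty$ (e.g.\ $\gamma_f^{-1}(0)=Lip(f)$ may be infinite for a genuinely non-Lipschitz $f$), which are handled by reading convexity in the extended-real-valued sense: the defining inequality is automatic whenever one of the endpoint values equals $+\infty$ and its coefficient is positive.
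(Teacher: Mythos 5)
Your proposal is correct, and your ``shortcut'' paragraph is exactly the paper's own proof: the paper states the theorem as a corollary of Theorem~\ref{theorem-decomp-LB-BD} without spelling out the details, and your combination of the summation bound with the scaling identity $\gamma_{kf}(m)=|k|\gamma_f(m/|k|)$ (and its analogue for $\gamma_f^{-1}$) applied to $f=\lambda f+(1-\lambda)f$ is precisely the missing derivation. Your direct $\varepsilon$-near-minimizer argument is a correct self-contained alternative, though it essentially re-proves the decomposition theorem in the special case needed.
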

\begin{proof}
A corollary of the  Theorem \ref{theorem-decomp-LB-BD}.
\end{proof}
\begin{corollary}
Suppose $SPWL(f,g) \leq \sigma$. Then $|\gamma_f(m)-\gamma_g(m)| \leq \sigma/2,\;\forall m\geq 0.$
\end{corollary}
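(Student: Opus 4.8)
The plan is to read the corollary off directly from the Summation Bound on the LB-BD function (Theorem~\ref{theorem-decomp-LB-BD}(a)), together with the elementary evaluation $\gamma_h(0)=diam(h)/2$ recorded in Lemma~\ref{lemma-elem-LB-BD}(b). The idea is that $f$ and $g$ differ by a function whose \emph{sup norm} is small, so the residual $f-g$ can be absorbed entirely into the ``$m=0$'' end of the LB-BD curve, where its value depends only on how far $f-g$ spreads out (its diameter), not on how wiggly it is.

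Concretely, I would first decompose $f=g+(f-g)$ and apply Theorem~\ref{theorem-decomp-LB-BD}(a) with the trivial splitting $m=m+0$, which gives, for every $m\ge 0$,
\[ \gamma_f(m)\ \le\ \gamma_g(m)+\gamma_{f-g}(0). \]
Since $\SPWL(f,g)\le\sigma$, the function $f-g$ is bounded, so Lemma~\ref{lemma-elem-LB-BD}(b) applies and $\gamma_{f-g}(0)=diam(f-g)/2$. Running the same argument with $f$ and $g$ interchanged (using $g=f+(g-f)$ and $diam(g-f)=diam(f-g)$) yields the reverse inequality $\gamma_g(m)\le\gamma_f(m)+diam(f-g)/2$, and combining the two gives
\[ |\gamma_f(m)-\gamma_g(m)|\ \le\ \tfrac12\,diam(f-g),\qquad m\ge 0. \]

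It then remains only to express the right-hand side through the hypothesis. From $\SPWL(f,g)\le\sigma$ one has $-\sigma\le f(x)-g(x)\le\sigma$ for all $x\in D$, hence $diam(f-g)=\sup_{D}(f-g)-\inf_{D}(f-g)\le 2\sigma$ in general, and $diam(f-g)\le\sigma$ whenever the deviation $f-g$ does not change sign on $D$ (equivalently, when $\sigma$ bounds the oscillation of $f-g$ rather than its sup norm), which is the situation that produces the stated bound $\sigma/2$. There is no genuine obstacle in this argument: the whole content is already in Theorem~\ref{theorem-decomp-LB-BD}, and the only points requiring care are the choice of splitting ($m=m+0$, so that $\gamma_{f-g}$ is evaluated at $0$ and the full Lipschitz budget is left for $\gamma_g$) and the elementary—but easily miscounted—passage from the sup-norm deviation of $f-g$ to the diameter that governs $\gamma_{f-g}(0)$.
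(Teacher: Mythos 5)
Your argument is the same as the paper's: decompose $f=g+(f-g)$, apply Theorem~\ref{theorem-decomp-LB-BD}(a) with the splitting $m=m+0$, evaluate $\gamma_{f-g}(0)$ via the diameter using Lemma~\ref{lemma-elem-LB-BD}(b), and symmetrize. Where you differ is precisely at the step you call ``easily miscounted,'' and you are right to flag it: the hypothesis $SPWL(f,g)\leq\sigma$ gives $\|f-g\|_{\infty}\leq\sigma$, hence only $diam(f-g)\leq 2\sigma$ and $\gamma_{f-g}(0)\leq\sigma$, so this route yields $|\gamma_f(m)-\gamma_g(m)|\leq\sigma$ in general, not $\sigma/2$. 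The paper's own proof asserts ``$SPWL(h,0)\leq\sigma$, therefore $diam(h)\leq\sigma$,'' which conflates the sup norm of $h=g-f$ with its oscillation and is false when $h$ changes sign. Indeed the corollary as stated fails: take $g\equiv 0$ and $f$ a rapid zig-zag between $-\sigma$ and $\sigma$; then $SPWL(f,g)=\sigma$ and $\gamma_g(0)=0$, but $\gamma_f(0)=diam(f)/2=\sigma>\sigma/2$. So your proof is the corrected version of the paper's argument: it establishes the bound $\sigma$ unconditionally, and recovers $\sigma/2$ exactly under the extra condition you identify (that $\sigma$ bounds the oscillation of $f-g$, e.g.\ when $f-g$ does not change sign). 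The one thing you should add is an explicit statement of which conclusion you are claiming, since as written the $\sigma/2$ of the corollary is not what your (correct) argument delivers under the stated hypothesis.
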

\begin{proof}
Let $h=g-f$ then $SPWL(h,0) \leq \sigma$. Therefore $diam(h) \leq \sigma$ and we conclude
$\gamma_h(0) \leq \sigma/2$.
Now by applying the Decomposition Theorem to $f = g + h$ and for $m_1=m,\;m_2=0$:
\[\gamma_{f}(m) \leq \gamma_{g}(m) + \gamma_{h}(0)   \leq \gamma_{g}(m) + \sigma/2 \;\Rightarrow \; \gamma_{f}(m) - \gamma_{g}(m) \leq \sigma/2.\]
Similarly we can show that: $ \gamma_{g}(m) - \gamma_{f}(m) \leq \sigma/2,$ and thus the proof is complete.
\end{proof}

The following theorem provides a link between the LB-BD of a function and a grid approximation of the function for the 1-dimensional case.
\begin{theorem} (LB-BD Grid Approximation)
Suppose $f: [a,b] \rightarrow \R$ and consider a grid approximation given by
${\bf x}=(x_1,\cdots,x_n)$ and ${\bf y}=(f(x_1),\cdots,f(x_n))$ and denote the grid function by $g$.
Denote the linear interpolation of $g$ on $[a,b]$ by $LI(g)$ and suppose $SPWL(f,LI(g)) \leq \sigma.$
Then $0 \leq \gamma_f(m) - \gamma_g(m) \leq \sigma.$ Note that $\gamma_g$ is calculated with respect to
the domain of $g$ which is ${\bf x}=(x_1,\cdots,x_n)$ and not $[a,b]$.
\label{theorem-LB-BD-grid-approx}
\end{theorem}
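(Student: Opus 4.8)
The claim splits into the two inequalities $\gamma_g(m) \le \gamma_f(m)$ and $\gamma_f(m) \le \gamma_g(m) + \sigma$, and I would handle them separately. For the first, note that $g$ is nothing but the restriction of $f$ from the domain $[a,b]$ to the finite set ${\bf x} = \{x_1,\dots,x_n\} \subset [a,b]$. Hence $\gamma_g(m) \le \gamma_f(m)$ is exactly the restriction property in Lemma~\ref{lemma-elem-LB-BD}(c), which already gives $0 \le \gamma_f(m) - \gamma_g(m)$ with nothing further to prove.

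The substantive half is $\gamma_f(m) \le \gamma_g(m) + \sigma$. Fix $\epsilon > 0$. By the definition of $\gamma_g(m)$ there is a function $h$ on ${\bf x}$ with $Lip(h) \le m$ and $\|g-h\|_{\infty} \le \gamma_g(m) + \epsilon$, the sup norm taken over ${\bf x}$. Let $\tilde h = \LI(h)$ be the piecewise-linear interpolant of $h$ on $[a,b]$. Two elementary facts about $\LI$ drive the argument: (i) since consecutive values of $h$ differ by at most $m$ times the corresponding gap in ${\bf x}$, every linear piece of $\tilde h$ has slope of magnitude at most $m$, so $Lip(\tilde h) \le m$; and (ii) on each subinterval $[x_i,x_{i+1}]$ the value of $\LI(\phi)$ is a convex combination of $\phi(x_i)$ and $\phi(x_{i+1})$, so $\|\LI(\phi)\|_{\infty} \le \|\phi\|_{\infty}$ for any grid function $\phi$. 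Then, for $x \in [a,b]$,
\[
|f(x) - \tilde h(x)| \;\le\; |f(x) - \LI(g)(x)| + |\LI(g)(x) - \LI(h)(x)| .
\]
The first term is at most $\sigma$ by hypothesis ($SPWL(f,\LI(g)) \le \sigma$). For the second, linearity of $\LI$ over the fixed grid gives $\LI(g) - \LI(h) = \LI(g-h)$, and fact (ii) bounds this by $\|g-h\|_{\infty} \le \gamma_g(m) + \epsilon$. Hence $\|f - \tilde h\|_{\infty} \le \sigma + \gamma_g(m) + \epsilon$, i.e.\ $f \in \ALB([a,b],m,\sigma + \gamma_g(m) + \epsilon)$, so $\gamma_f(m) \le \sigma + \gamma_g(m) + \epsilon$; letting $\epsilon \downarrow 0$ finishes it.

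An equivalent route is to apply the Summation Bound of Theorem~\ref{theorem-decomp-LB-BD} to the decomposition $f = \LI(g) + (f - \LI(g))$ on $[a,b]$: with $m_1 = m$ and $m_2 = 0$ one gets $\gamma_f(m) \le \gamma_{\LI(g)}(m) + \gamma_{f-\LI(g)}(0)$, where $\gamma_{f-\LI(g)}(0) \le diam(f-\LI(g))/2 \le \|f-\LI(g)\|_{\infty} \le \sigma$ by Lemma~\ref{lemma-elem-LB-BD}(b), and $\gamma_{\LI(g)}(m) = \gamma_g(m)$ since ``$\le$'' is the same extension-by-$\LI$ argument and ``$\ge$'' is again restriction. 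I do not anticipate a genuine obstacle; the only points needing care are the two structural facts about linear interpolation — that it does not increase the slope bound and is nonexpansive in the sup norm — together with the routine $\epsilon$-bookkeeping forced by $\gamma_g(m)$ being an infimum rather than a minimum.
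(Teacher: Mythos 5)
Your proposal is correct and follows essentially the same route as the paper: both take a grid function $h$ nearly attaining $\gamma_g(m)$, pass to its linear interpolant $LI(h)$ (which keeps the Lipschitz bound $m$ and whose sup-distance to $LI(g)$ is controlled at the breakpoints), and conclude via the triangle inequality $SPWL(f,LI(h)) \leq SPWL(f,LI(g)) + SPWL(LI(g),LI(h)) \leq \sigma + \gamma_g(m) + \epsilon$. The first inequality is likewise handled by the restriction property in both arguments, so no further comparison is needed.
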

\begin{proof} See Appendix.
\end{proof}
For most functions (even simple smooth ones) obtaining the LB-BD curve analytically is not possible. However
using this theorem, we can find a grid for which the grid approximation is arbitrarily close to the original function. Then if we are able to find the LB-BD curve for the gridded function, we can approximate the LB-BD curve of the original function closely. This is also useful from a computational point of view when we are working with data or gridded functions. For
example if we are working with data with ${\bf x}=(x_1,\cdots,x_N)$ and ${\bf y}=(y_1,\cdots,y_N)$ where $N$ is large as we show in the following the LB-BD curve calculation becomes computationally intensive. However we may be able to find sub-grids of ${\bf x}$ and ${\bf y}$: ${\bf x'}=(x_{i_1},\cdots,x_{i_n}); {\bf y'}=(y_{i_1},\cdots,y_{i_n})$, for $1\leq i_1 < \cdots < i_n \leq N$, such that $n << N$ ($n$ is much smaller than $N$) and $SPWL({\bf y},LI({\bf x};{\bf x'},{\bf y'}))\leq \sigma$. We can approximate LB-BD curve of $({\bf x},{\bf y})$ by calculating that of $({\bf x'},{\bf y'})$ and noting that
$\gamma_{({\bf x},{\bf y})} - \gamma_{({\bf x'},{\bf y'})} \leq \sigma$.

\subsection{Calculating LB-BD function}
\label{sect:calculating-LB-BD}
This subsection assumes we have access to gridded data and using that we develop methods to calculate the LB-BD curve. Theorem
\ref{theorem-LB-BD-grid-approx} then can be applied to make a connection to a full curve or a curve defined on a  more fine resolution.
This may seem contradictory to the sparse data situation at first but as we discuss in more details later the LB-BD curve for many applications
does not vary much from one time period to another or we may use the LB-BD curve of a temporal process in one location with dense data for another close location
with sparse data. As an example we show that the LB-BD curve is similar for the temporal process of several central sites for Ozone process in Southern California.

Below we start with a heuristic moving average filtering method for calculating the LB-BD curve for 1-d case. Then we proceed to an exact method
by representing the LB-BD calculation as a  convex optimization problem. This convex optimization method also works for the multidimensional data case. However for the 1-d case, we also present a faster method by representing the problem as a different convex optimization method.

\subsection{Convex optimization method}
This subsection discusses methods for calculating the LB-BD function for gridded functions. Suppose $f:D \subset \R^d$ is a given function for which we like to find the LB-BD curve. To calculate $\gamma_f(m)$, we need to solve:\begin{equation}
\underset{g \in \LB(m)}{\inf} \;\underset{x \in D}{\sup} |f(x)-g(x)|. \label{eqn-deriv-bound}
\end{equation}
Here we present a
method for estimating the LB-BD function by solving Equation \ref{eqn-deriv-bound} using convex optimization,  when $D$ is a finite subset.\\

\noindent{\bf Convex optimization for calculating LB-BD function:}
Suppose $f: D \subset \R^d \rightarrow \R,$ is defined on a finite domain ${\bf x}=(x_1,\cdots,x_n)$ ($D$ is the set defined by the elements of {\bf x}) and takes the values ${\bf y}=f({\bf x})=(y_1,\cdots,y_n)$. Consider an approximation of
${\bf y}=f({\bf x})$ by  $y^{\star}$:
\[y_i={y_i^{\star}}+{r_i},\]
where $r_i$ is the deviation from the true value at $y_i$. This approximation belongs to $\LB(m)$ if
and only if
\[y_i^{\star}-y_j^{\star} \leq m ||x_i-x_j||,\; \forall i,j \in \{1,\cdots,n\},\]
(  \cite{Beliakov-2006}). We conclude that finding the value of $\gamma_f(m)$ is equivalent to
 minimizing $\underset{i=1,\cdots,n}{\max} |r_i|$. Now we pose the convex optimization method:
\begin{enumerate}[(a)]
\item For finding $\gamma_f$:
\begin{align*}
&\mbox{minimize}& \;\; \underset{i=1,\cdots,n}{\max} |r_i|,&\\
&\mbox{subject to}& \;\;r_i-r_j  \leq & m  ||x_i-x_j||  + (y_j-y_i),\;\forall i,j \in \{1,\cdots,n\}.
\end{align*}
\item For finding $\gamma^{-1}_f:$
\begin{align*}
&\mbox{minimize}&\;\; \underset{i=1,\cdots,n}{\max} & |(y_j-y_i+(r_i-r_j))|/||x_i-x_j||,\\
&\mbox{subject to}&\;\;|r_i| \leq & \sigma ,i \in \{1,\cdots,n\},\;\forall i,j \in \{1,\cdots,n\}.
\end{align*}
\end{enumerate}

These problems can then be implemented in the CVX package of Matlab (see \cite{Grant-2008}). (a)
is the minimization of a maximum of absolute values of $n^2$ affine functions and with $n$ affine constraints.
(b) is the minimization of a maximum of absolute values of $n$ affine functions and with $n^2$ affine
constraints.\\

\noindent{\bf Convex optimization for 1-d case:}
Suppose we want to calculate $\gamma_f(m)$ where $f$ is defined on $[a,b]$ and is equal to the linear
interpolation of $a \leq x_1 <x_2 <\cdots<x_n \leq b$ with values $(y_1,\cdots,y_n)$. Then
\[\gamma_f(m)=\underset{g \in \LB(m)}\inf SPWL(f,g)=\gamma_f(m)=\underset{g \in \PL(m)}\inf SPWL(f,g),\]
because $\PL(m)$ is dense in $\LB(m)$. Now suppose a $g\in \PL(m)$ attains $SPWL(f,g)=\sigma$. Because $g$ is
piece-wise linear, $g$ has breakpoints at $a\leq z_1< z_2 < \cdots <z_k \leq b$. Clearly we can assume $z_i$s
include the $x_i$s as we do not require $g$ to change slope at every break point. Moreover  we claim
that there is always a $h\in \PL(m)$ which is as close to $f$ as $g$, $SPWL(f,h) \leq \sigma$,  and only requires
break points at $x_i$s. We define such a $h$ by modifying $g$. We define $h$ to be the linear interpolation of
$x=(x_1,\cdots,x_n)$ with values at $y=(g(x_1),\cdots,g(x_n))$. Then it is clear that $h \in LB(m)$ (because $g$
is) and $SPWL(f,h)=SPWL(f,g)=\sigma$. Any such $h$ can be written as a linear combination of
\[h(x) = c_0 + \sum_{i=1}^n m_i 1_{\{x>x_i\}}(x-x_i) \]
where $1_{x>x_i}=1 \iff x>x_i$. Now using this definition at the breakpoints ${\bf x}=(x_1,\cdots,x_n)$ and the definitions:
\[{\bf y}:=\begin{pmatrix}y_1\\ y_2 \\ y_3 \\ \vdots\\y_n\end{pmatrix},\;
{\bf X}=\begin{pmatrix}1 & 0 & 0 &  \cdots & 0\\
1 & x_2-x_1 & 0 &  \cdots & 0\\
1 & x_3-x_1 & x_3-x_2 &  \cdots & 0\\
\vdots & \vdots & \vdots  & \vdots & \vdots \\
1 & x_n-x_1 & x_n-x_2 &  \cdots & x_n-x_{n-1}\\\end{pmatrix},\;{\bf r}=\begin{pmatrix}r_1\\ r_2 \\ r_3 \\
\vdots\\r_n\end{pmatrix},\;{\bf m}=\begin{pmatrix} m_1 \\ m_2 \\ \vdots \\ m_{n-1} \end{pmatrix},\]
we can write
\begin{eqnarray*}
 {\bf y} &=& {\bf X}\begin{pmatrix} c_0 \\ {\bf m} \end{pmatrix} + {\bf r},\\
\end{eqnarray*}
where $c_0$  is the value of $h$ at $y_1$; $m_1,\cdots,m_{n-1}$ are the slopes at the break points. Then $f$ belongs
to $\ALB(m,\sigma)$ if and only if
\begin{equation}\underset{i=1,\cdots,n-1}{\max}{|m_i|}\leq
m,\;\;\;\underset{i=1,\cdots,n-1}{\max}{|r_i|}\leq \sigma.\label{equation_conditions_ALB_data}
\end{equation}

Additionally if we define ${\bf 1}_n$ to be a column vector of all 1s and of length $n$, for all natural numbers
$n$, we can also write the conditions in \ref{equation_conditions_ALB_data} in the matrix form:
\[ -m {\bf 1}_{n-1} \leq {\bf m} \leq m {\bf 1}_{n-1},\;\;-\sigma {\bf 1}_n \leq {\bf r} \leq \sigma {\bf 1}_n.\]
Or if we use the definition of maximum norm(infinity norm):
$||(x_1,\cdots,x_n)||_{\infty}=\underset{i=1,\cdots,n}{\max}{|x_i|}$, we can write them as $||{\bf m}||_{\infty}\leq m,\;||{\bf r}||_{\infty}\leq \sigma.$
For the periodic case, $\PALB(m,\sigma)$, we need an extra condition which assures that the magnitude of the slope
of the line going from the last point $(x_n,h(x_n))$ to $(b+(x_1-a),h(x_1)=c_0)$ is also less than $m$:
\[-m \leq \frac{\sum_{i=2}^n(x_i-x_{i-1})m_{i-1}}{(b-a)-(x_n-x_1)} \leq m.\]
Now we pose the convex optimization method:
\begin{enumerate}
\item[(a)] For finding $\gamma_f$:
 \begin{eqnarray*}
&\mbox{minimize}\;\;& ||{\bf r}||_{\infty},\\
&\mbox{subject to}\;\;& -m {\bf 1}_{n-1} \leq {\bf m} \leq m {\bf 1}_{n-1}
\end{eqnarray*}
\item[(b)] For finding $\gamma^{-1}_f:$
 \begin{eqnarray*}
&\mbox{minimize}&\;\; ||{\bf m}||_{\infty},\\
&\mbox{subject to}&\;\; -\sigma {\bf 1}_{n-1} \leq {\bf r} \leq \sigma {\bf 1}_{n-1}
\end{eqnarray*}
\end{enumerate}

Table \ref{table-convex-optim-comparison} compares the computation time for 1-d functions using both the fast (1-d) and general method. The computational gain is very significant and grows with the data size.
{\tiny
\begin{table}[H]
 \caption{\small Comparison between the optimization methods (to calculate LB-BD curve) for 1-d and general method. The methods are applied
 to an equally-spaced gridded version of the function $\sin(2\pi x),\,x \in [0,1]$ with various resolutions determined by data size. In each case, the BD value is calculated for LB in $\{0,0.1,0.2,\cdots,10\}$.}
 \centering  \footnotesize
 \begin{tabular}{|l|c|c|c|c|c|c|}
\toprule[1pt]
data size & General method time(s) & Fast method (1-d) time(s) & Time Ratio \\
\midrule[1pt] 
10 & 38 & 19 & 2.0 \\
20 & 58 & 21 & 2.7\\
30 & 49 & 175 & 3.6 \\
40 & 68 & 606 & 8.9 \\
50 & 87 & 2139 & 25 \\
\bottomrule[1pt]
\end{tabular}
\label{table-convex-optim-comparison}
\end{table}}

\subsection{Choosing appropriate parameters using data}
\label{sect:choosing-LB-BD}
In order to be able to use the prediction errors of various methods or for applying the
 $Lipfit$ method, one needs to find appropriate LB and BD. In some applications,
 this can come from the expert knowledge of the practitioner. It is often unreasonable to assume that high-order
derivatives exist and also require the practitioner to know about its magnitude. On the other hand, for many
applications, using physical/chemical/biological properties of the process, the practitioner may obtain a bound
on the rate of change of the process as measured by LB and a small-scale deviation (BD). The small-scale deviation may
refer to the accuracy of the measurement device or small-scale variations of the process. However one does not
need to merely rely on the expert knowledge or the properties of the processes. In the following we show how one
can use available data to get an estimate of these parameters $m$ and $\sigma$.

In the above, we presented a method to calculate LB-BD curve when we have sufficient data from the process under the
study. One question is given an LB-BD curve which pair should be used for fitting the $Lipfit$ method and calculating the prediction errors.
The main method that we discuss here is picking the pair which minimizes the given errors.  \cite{hosseini-2013-sparse-tech} develops  {\it validation} methods
(either using multiple instances of the process or cross-validation). Also after all the goal is to approximate curves when enough data is not available and it may look such
a method is not useful in practice. Here we discuss under what situations this method may be useful by giving concrete situations where the methods can be applied.
Also in Section \ref{sect:application}, we apply the methods to air pollution data.

\subsection{Prediction errors given the LB-BD curve}
For a function $f:D \subset \R^d \rightarrow \R$ observed at given points ${\bf x}=(x_1,\cdots,x_n)$ and with values equal to
${\bf y}=(y_1,\cdots,y_n)$, we found the errors for estimating it over the entire domain $D$: $IE, DIE, SPWE,
DSPWE, DSPWE$ for each of the methods (e.g. $LI$ and $Lipfit$) and for a given fixed pair of LB and BD. Now
suppose instead of one single pair, a {\it partial} LB-BD curve
\[\gamma_f:U\subset \R^{\geq 0} \rightarrow \R^{\geq 0},\]
is given. We can think of $U$ as a subset of $\R^{\geq 0}$, where information is available about $f$. Then we
can extend the above errors of estimating $f$ on $D$ by taking the infimum over all the available pairs of
LB-BD. Suppose $approx$ denote the method (for example $approx=Lipfit$) and $E$ the error measure (for example
$E=DSPWE$), then the minimal error of estimating $f$ given $\gamma_f$ is defined as follows:
\[\Upsilon\{E,approx,f,D,{\bf x},{\bf y}\;|\;\gamma_f\}=\inf_{m\in U} E\{approx,f,D,{\bf x},{\bf y}\;|\;m,\gamma_f (m)\}.\]
Since $f$ belongs to all $(m,\gamma_f (m))$, when the infimum is obtained by some $m_0 \in U$, we can apply the
method $approx$ with that $(m_0,\gamma_f(m_0))$ to get the error $E=\Upsilon$, therefore minimizing the error
on $D$ as much as possible. If the infimum is not obtained for any small $\epsilon$ there is $m_0 \in U$ so
that $E$ is within a radius of $\epsilon$ of $\Upsilon$.

In applications, for a given function there
are infinitely many pairs of LB-BD, $(m,\sigma)$, for which the given function belongs to $\ALB(D,m,\sigma)$. Then we can use the above idea to define a method for picking an appropriate $(m,\sigma)$ for fitting and calculating the errors. We introduce the Prediction Error Minimization Method (PEM)  as follows. For a given 
LB-BD curve and data set  $({\bf x},{\bf y}),$ pick the pair which minimizes the appropriate
prediction error of interest:
\[PEM:\;({\bf x},{\bf y},\gamma_f) \mapsto (m,\gamma_f(m)),\]
or more formally attains $\Upsilon\{E,approx,f,D,x,y\;|\;\gamma_f\}$. In practice it may not be obvious if any (unique) pair attain the infimum, however one can always consider a grid for $m$ and pick (one of the) the grid point for which $E\{approx,f,D,x,y\;|\;m,\gamma_f (m)\}$ is minimized. We will use and expand on this method in
Section \ref{sect:application}. Validation methods such as cross validation which are widely used 
in picking the parameters of statistical models (e.g. in \cite{book-hastie-tib}) can be used to pick $(m,\sigma)$ as discussed in \cite{hosseini-2013-sparse-tech} .

\section{Application to air pollution data}
\label{sect:application}
This section applies the methods developed in this work to air pollution data.
We are interested in approximating the biweekly averaged air pollution (Ozone) process in homes and schools in
Southern California during 2005, using three biweekly measurements in the spring, summer and winter. The moving
average process refers to a process for which the value of the process on each day is the average of the process
in 15 days centered around that day. (For data collection in the study, measurement filters are placed in the
school for two-week periods to collect aggregated air pollution levels.)  We also have access to 11 central
sites for which complete data are available during 2004--2007. To be more concrete denote the biweekly
averaged pollution process by $Y(s,t)$ at the location $s$ for which three times during 2005 are available:
$Y(s,t_1),Y(s,t_2),Y(s,t_3)$. We denote the 11 central site locations  by $s_1,s_2,\cdots,s_{11}$.

Figure \ref{LB_BD_comm_nonper.pdf} depicts the calculated LB-BD functions for the 11 communities  for both
$\ALB$ and $\PALB$ families. There is good consistency among the
curves across the central sites, except for Santa Maria for which the curve is visibly placed below all the
other curves for both families. This is because Santa Maria is a much cleaner community with lower levels of
Ozone and its variation across the year. This figure suggests that if we use the LB-BD curve from one location
for a location which is not too far or too different from the location of interest then the results will be
reliable.

\begin{figure}
\centering
\includegraphics[width=0.4\textwidth]{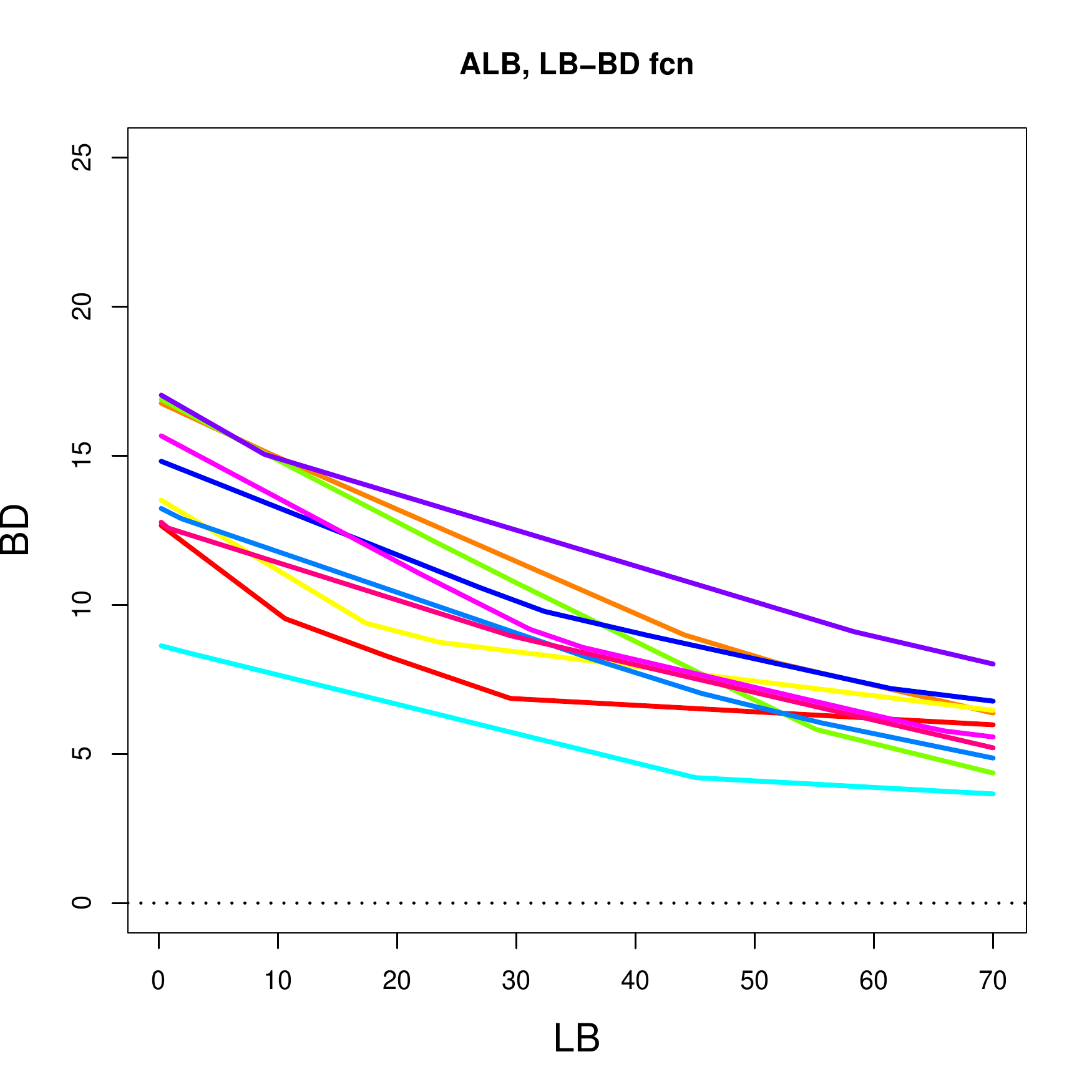}\includegraphics[width=0.4\textwidth]{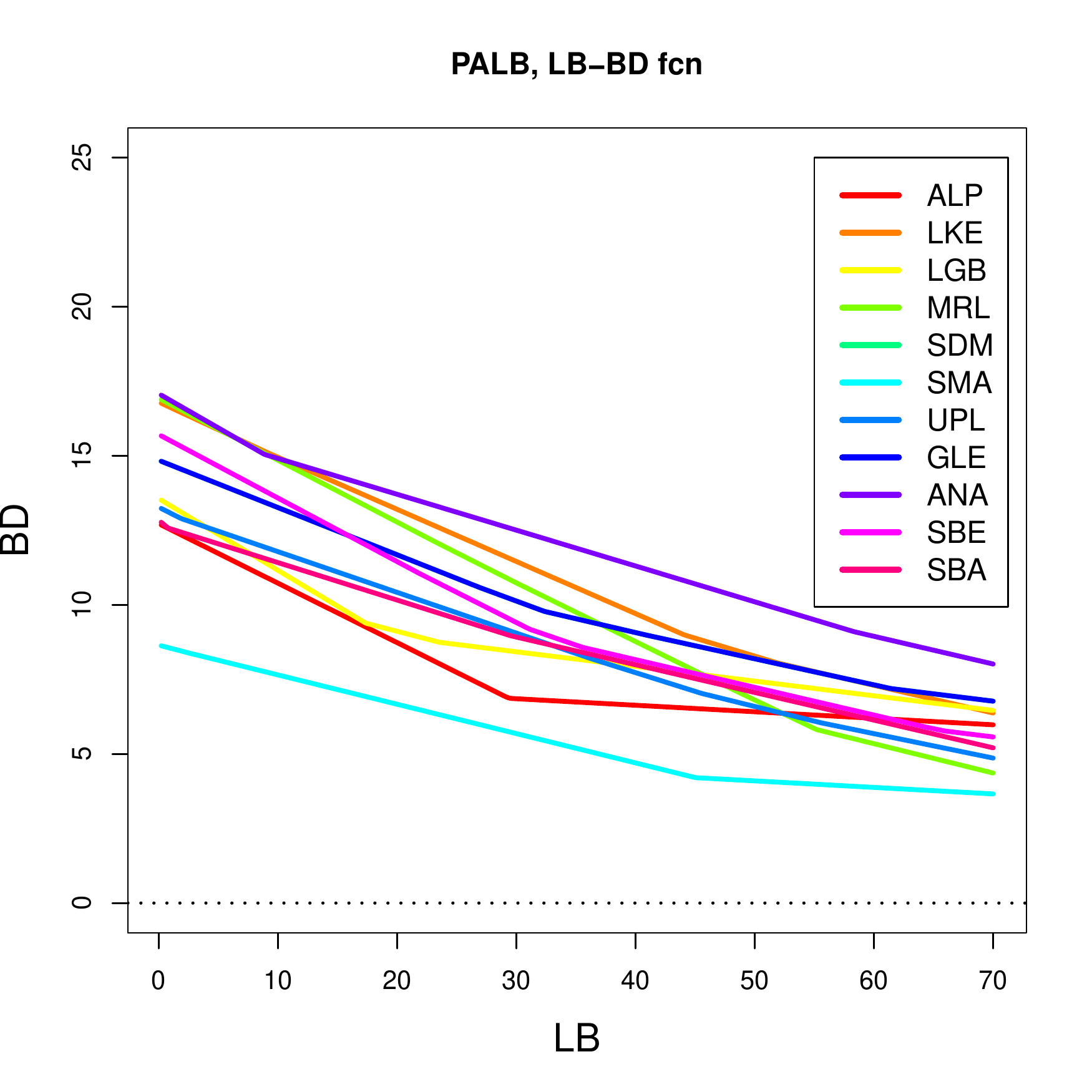}
\caption{\small LB-BD functions for the temporal biweekly Ozone process for 11 central stations in Southern California.} \label{LB_BD_comm_nonper.pdf}
\end{figure}
Here we describe two scenarios one may be interested in for prediction:  (1) We assume that the new location
$s$ does not belong to any of the communities and therefore does not have a very similar weather pattern to any central site with complete data. (2) We
assume that the station belongs to one of the communities and therefore there is a rather close central site but
there is no guarantee that the seasonal patterns completely match. Also assume in (1) there are no observations in
nearby locations with similar weather patterns to borrow strength across space to build a complex statistical
model; in (2) there is a nearby station with complete data across the year, but there is no guarantee the
seasonal patterns of the air pollution process in the two locations are exactly the same. In situation (1), as discussed
above, we can choose nearby locations with complete data and then apply the methods described above to find
appropriate LB-BD for those locations in order to use for the location with incomplete data. Note that all
we need is a bound on $m$ and $\sigma$ and therefore we can use slightly larger $m$, $\sigma$ from what we have
found, if there is more doubt about the similarity of the other locations in terms LB-BD.

In situation (2), suppose the closest central site location is $s_i$. Then we can calculate the difference process:
$D(t_j)=Y(s,t_j)-Y(s_i,t_j)$. If the seasonal patterns are the same, we must have $D(t_1)\approx D(t_2) \approx
D(t_3)$. Note that this does not guarantee that the seasonal patterns are the same and we are obliged to rely
also on some expert knowledge. However if such a knowledge is available one may suggest to take the average of
the differences $D=\sum_{j=1}^3 D(t_j)/3$ and then return $Y(s_i,t)-D$ as an approximation of $Y(s,t)$ for all $t$ during 2005.
Instead of averaging $D(t_j)$'s: we approximate the difference process $D(t)$ using $Lipfit$ to get a complete series $\hat{D}(t)$ and then return $Y(s_i,t)-\hat{D}(t)$ as our approximation of
$Y(s,t)$. Thus we accommodate the possibility that the difference between the two processes
$Y(s,t)$ and $Y(s_i,t)$ varies over time and $Lipfit$ tries to capture that variation.

To give example for scenarios (1) and (2), we choose three central sites in Upland (UPL), Long Beach (LGB) and Anaheim
(ANA). We focus on approximating the curves for Upland and Long Beach. For scenario (1) we calculate the LB-BD curve for Upland, Long Beach and for 
scenario (2) we calculate the LB-BD curve for the difference of the two processes with Anaheim. Then for each
scenario and for both $\ALB$ and $\PALB$, we use the Prediction Error Minimization method (PEM) to pick a LB-BD pair.
Table \ref{table-airpol-LB-BD-curve-errors} presents the results of calculating the minimal error, $\Upsilon$, to
the Ozone process during 2005 and in the locations Long Beach and Upland and the difference processes
of these two locations with Anaheim. For each case the LB-BD curve is calculated using the data and then the
minimal error for various integral and point-wise error measures when three data points are available are reported in the table. 
The pair (LB, BD) for which the error is minimized is also reported. We have
done this for both $\ALB$ and $\PALB$ families. The results observed in the table are as follows: $\PALB$ method
generally works better for these data due to the approximate periodicity of Ozone process; the data-informed
errors are considerably smaller than their non-informed versions  in some
cases, showing that developing and using the data-informed errors is worthwhile; the $Lipfit$ method has
outperformed $LI$ in some cases and is never inferior to $LI$.

{\tiny
\begin{table}[H]
 \caption{\small The errors $(IE, DIE);(SPWE, DSPWE[LI], DSPWE[Lipfit])$ with three data points.}
 
  \centering  \footnotesize
\begin{tabular}{|l|cc|cc|}

\toprule[1pt]
 &  $\ALB$    & &  $\PALB$ &   \\
\midrule[1pt] 
Process &  errors   & (LB, BD)  & errors & (LB, BD) \\ 
\midrule[1pt] 
LGB &  (12, 10); (15, 15, 15)   &  (36, 9) &(13, 8.1); (15, 15, 13)  & (29, 11)  \\
UPL &   (13, 9.9); (15, 15, 15)  &  (17, 12) &(13, 7.1); (14, 14, 14)  &  (17,12)  \\
ANA-LGB &   (6.6, 5.2); (8.3, 8.3, 8.3)  &(20, 5) & (7.1, 4.2); (8.4, 7.1, 6.5)   & (16, 5.7) \\
ANA-UPL &  (6.4, 4.6); (6.9, 6.9, 6.9)  & (5.5, 6.0) &(6.6, 3.9); (7, 6.1, 5.1)  &  (5.4, 6.1) \\
\bottomrule[1pt]
\end{tabular}
\label{table-airpol-LB-BD-curve-errors}
\end{table}}

\section{Discussion and future directions}
\label{sect:discussion}

This work developed a framework for fitting functions with sparse data. At first we considered a framework based on measuring the variation of the functions
by Lipschitz Bound, also considered by \cite{Sukharev-1978} and \cite{Beliakov-2006}. The limitation in using such a framework is due to the fact that many processes in practice, despite revealing a slow global variation, have some smaller scale variations
which cause the Lipschitz Bound to be too large to be useful in fitting or calculating the prediction errors.
Thus this work extended this framework by accepting a number $m$ as Lipschitz Bound up to a Bound Deviation,  $\sigma,$ if the function of interest can be approximated by another which accepts $m$ as LB and does not deviate from the original function more than $\sigma$  (in terms of sup norm). Using this framework, we found reasonable fits and prediction errors for functions which do not admit a small enough Lipschitz Bound.

 Another key idea we introduced is the formalization of the trade-off between the variation measure (LB) and the deviation measure (BD as measured by sup norm here) which is summarized in a non-increasing convex curve -- LB-BD function.
 We provided convex optimization methods to calculate the LB-BD curve using data or gridded versions of the functions under study and provided the connection of the LB-BD curve of a gridded function to its more fine-resolution version. Given the LB-BD curve for a function, we develop a method to find an appropriate LB-BD pair to apply $Lipfit$ -- a pair which depended on the data. Given the LB-BD curve, we also calculated the minimal prediction error, e.g.\;$DSPWE$, by minimizing it across the LB-BD curve. In the background section, we made some connection between this work and some smoothing methods such as smoothing splines. In fact the smoothing spline method merely used a different variation measure $\int_D ||f''(x)||dx$ and deviation measure: the sum of the square of the difference between the observed data and fitted. This immediately leads to the idea of generalizing this framework by choosing various variation and deviation measures and define a variation-deviation curve
 (an extension of LB-BD curve) for each case.

 Some other important extensions and open problems are: (1) Given a curve decide if the curve can be LB-BD curve for a function; such a curve must be non-increasing and convex as we show in this work, but is that enough? (2) If we consider a random process $\{Y(t)\}, t \in T$ for some space $T$ (time, space, spatial-temporal), for each instance of this process we can calculate an LB-BD curve. Thus LB-BD curve is a random quantity. In this work we assumed that this random quantity does not vary much from one instance to another; or it is applicable from one time interval to another; or if we use the LB-BD curve of a comparable data set (for example a nearby station) the results are reliable. In fact using some simulations and real air pollution data we showed that this can be the case. However, it is interesting to investigate the LB-BD variability for random processes and it may be even useful to develop parametric and non-parametric models for LB-BD curve as a random quantity. We also leave these problems for future research. (3) In this we work, we developed the Prediction Error Minimization (PEM) to pick a LB-BD curve to minimize the error of interest for example $DSPWE$ over the domain of interest, given data. Then we used that pair for applying $Lipfit$ and calculating the prediction errors. The restriction of this method lies in the fact that we used the same LB-BD pair to  approximate all points. Alternatively we can allow picking a different LB-BD pair for each given data point $x$, a method which in general can improve the $pef$ at any given point at a cost of more computations.\\

\textbf{Acknowledgements:}
This work was partially supported by National Institute of Environmental Health Sciences (5P30ES007048, 5P01ES011627, 5P01ES009581); United States Environmental Protection Agency (R826708-01, RD831861-01); National Heart Lung and Blood Institute (5R01HL061768, 5R01HL076647); California Air Resources Board contract (94-331); and the Hastings Foundation. The first author was also supported by research grants from the Japanese Society for Promotion of Science (JSPS). The authors gratefully acknowledge useful discussions with Drs Duncan Thomas, Jim Gauderman, and Meredith Franklin.

\section{Appendix: Proofs}


\begin{proof} Lemma \ref{lemma-elem-LB-BD}.
\begin{enumerate}[(a)]
\item This is straightforward from the properties of infimum.
\item For $m=+\infty,$ $f \in \LB(m)$ and therefore $\gamma_f(m)=0$. Also only constant functions satisfy $m=0$
and therefore $\gamma_f(m)=d/2$ as the constant function $g(x)=(\underset{z \in [a,b]}{\sup} f(z) - \underset{z
\in [a,b]}{\inf} f(z))/2$ minimizes $SPWL(f,g)$. For $\sigma=+\infty$ any bounded function, $g$, satisfies $SPWL(f,g)\leq \sigma$ including any constant
function $g=c$ for which we have $Lip(g)=0$. The only function, $g$, which satisfies $SPWL(f,g)=0$ is $f$ and
therefore $\gamma^{-1}(0)=Lip(f)$.
\item Obvious from the definition.
\item Suppose $\gamma_f(m)=\sigma$ which means $\sigma=\underset{g \in \LB(m)}{\inf} SPWL(f,g).$
Now let us calculate the quantity of interest $\gamma_{f_1}(m)$:
\begin{align*}
\gamma_{f_1}(m)=&\underset{g_1 \in \LB(m)}{\inf}SPWL(f_1,g_1)\\
=&\underset{g_1 \in \LB(m)}{\inf}\;\; \underset{x \in [a/k,b/k]}{\sup}|f_1(x)-g_1(x)|\\
=&\underset{g_1 \in \LB(m)}{\inf}\;\; \underset{x \in [a/k,b/k]}{\sup}|f(kx)-g_1(kx/k)|\\
=&\underset{g_1 \in \LB(m)}{\inf}\;\; \underset{y \in [a,b]}{\sup}|f(y)-g_1(y/k)|\\
=&\underset{g_2(y)=g_1(y/k);\,g_1 \in \LB(m)}{\inf}\;\;\;\; \underset{y \in [a,b]}{\sup}|f(y)-g_2(y)|\\
=&\underset{g_2 \in \LB(m/k)}{\inf}\;\; \underset{y \in [a,b]}{\sup}|f(y)-g_2(y)|\\
=&\gamma_f(m/k).
\end{align*}
\item Define $f_1(x)=kf(x)$ on the same domain. Then we have
\begin{align*}
\gamma_{f_1}(m)=&\underset{g_1 \in \LB(m)}{\inf}\;\;\underset{x \in [a,b]}{\sup}|f_1(x)-g_1(x)|\\
=&\underset{g_1 \in \LB(m)}{\inf}\;\;\underset{x \in [a,b]}{\sup}|kf(x)-g_1(x)|\\
=&\underset{g_2=g_1/k;\,g_1 \in \LB(m)}{\inf}\;\;\;\;\underset{x \in [a,b]}{\sup}|kf(x)-kg_2(x)|\\
=&\underset{g_2 \in \LB(m/k)}{\inf}\;\;\;\;\underset{x \in [a,b]}{\sup}|k||f(x)-g_2(x)|\\
=&|k|\gamma_f(m/k).
\end{align*}
\end{enumerate}
\end{proof}
\begin{proof} Theorem \ref{theorem-decomp-LB-BD}.
\begin{enumerate}[(a)]
\item Suppose $\gamma_{f_i}(m_i)=\sigma_i,\;i=1,2$. Then for any (small) $\epsilon>0$, there exist functions
$g_i \in \LB(m_i)$ such that $SPWL(f_i,g_i) \leq \sigma_i - \epsilon,\;i=1,2$. Then clearly $g=g_1+g_2 \in
\LB(m)$ and we have
\begin{align*}
\gamma_f(m) \leq & SPWL(f,g) \leq SPWL(f_1,g_1) + SPWL(f_2,g_2)\\
 \leq & \sigma_1 + \sigma_2 - 2\epsilon = \gamma_{f_1}(m_1) + \gamma_{f_2}(m_2) - 2\epsilon.
\end{align*}
Since above holds for any $\epsilon>0$, we conclude $\gamma_f(m) \leq  \gamma_{f_1}(m_1) + \gamma_{f_2}(m_2).$

\item Suppose $\gamma^{-1}_{f_i}(\sigma_i)=m_i,\;i=1,2$ and fix bounded $f_1,f_2$ so that $f=f_1+f_2$ and let
$d_i=diam(f_1),i=1,2,\;d=\max\{d_1,d_2\}$. Then for any (small) $\epsilon>0$, there exist functions $g_i \in
\LB(m_i+\epsilon)$ such that $SPWL(f_i,g_i) \leq \sigma_i,\;i=1,2$. Clearly $g=g_1+g_2 \in \LB(m+2\epsilon)$ and
define
\[c=\frac{m}{m+2\epsilon},\;\tilde{g}=cg.\]
 Then we have $\tilde{g}\in \LB(m)$ and
\begin{eqnarray*}
\gamma^{-1}_f(m) \leq SPWL(f,\tilde{g}) \leq SPWL(f_1,cg_1) + SPWL(f_2,cg_2)\\
 \leq SPWL(f_1,g_1) +  SPWL(g_1,cg_1) + SPWL(f_2,cg_2) + SPWL(g_2,cg_2)\\
 \leq \sigma_1 + \sigma_2 + (1-c)diam(g_1) + (1-c) diam(g_2)\\
  \leq \sigma_1 + \sigma_2 + (1-c)(d+\sigma_1)+(1-c)(d+\sigma_2)\\
  \leq  \sigma_1 + \sigma_2 + (1-c)(d+\sigma_1+\sigma_2)\\
= \gamma^{-1}_{f_1}(m_1) + \gamma^{-1}_{f_2}(m_2) + (1-c)(d+\sigma_1+\sigma_2)
\end{eqnarray*}
Since the above holds for any $\epsilon>0$, $(1-c)=2\epsilon/(m+2\epsilon)$ can become arbitrarily small. Now since
$(d+\sigma_1+\sigma_2)$ is fixed, we can omit the last term and conclude $\gamma^{-1}_f(m) \leq
\gamma^{-1}_{f_1}(m_1) + \gamma^{-1}_{f_2}(m_2)$.

\end{enumerate}
\end{proof}

\begin{proof}
Theorem \ref{theorem-LB-BD-grid-approx}.\\
First note that, clearly $\gamma_f(m) - \gamma_g(m) \geq 0$ as $f$ is defined on a domain which includes the domain of $g$.
Now suppose $ \gamma_g(m) = \sigma_1$. Then we claim that $\gamma_{LI(g)}(m)=\sigma_1$ also.
$\gamma_{LI(g)}(m) \geq \sigma_1$ is obvious because the domain of $LI(g)$  includes that of $g$.
To show that $\gamma_{LI(g)}(m) \leq \sigma_1$, for any $\epsilon>0$ we will show  $\gamma_{LI(g)}(m) \leq \sigma_1+\epsilon.$
Since  $ \gamma_g(m) = \sigma_1$, there is a grid function $h$, defined on ${\bf x}$, such that
$SPWL(g,h) \leq \sigma_1 +\epsilon$ and $h \in \LB(m,{\bf x})$. Now consider the linear interpolation of $h$ on the interval
$[a,b]$ and denote it by $LI(h)$. Then we  have $LI(h) \in \LB(m,[a,b])$. But we also have $SPWL(LI(g),LI(h)) \leq \sigma_1 + \epsilon$ because the supremum distance is obtained at the break points for piece-wise linear functions.
Therefore \[SPWL(f,LI(h)) \leq SPWL(f,LI(g)) + SPWL(LI(g),LI(h)) \leq \sigma + \sigma_1 + \epsilon,\;\forall \epsilon \geq 0.\]
We conclude $\gamma_f(m) \leq  \sigma + \sigma_1 = \sigma +  \gamma_g(m),$ which completes the proof.
\end{proof}


\begin{thebibliography}{99999}

\bibitem[Beliakov(2006)]{Beliakov-2006}
 Beliakov, G. (2006)
\newblock Interpolation of Lipschitz functions.
\newblock \emph{Journal of Computational and Applied Mathematics}, 196(1):20--44


\bibitem[Beliakov(2007)]{Beliakov-2007}
Beliakov, G. (2007)
\newblock Smoothing Lipschitz functions.
\newblock \emph{Optimization Methods and Software}, (22)6:901--916


\bibitem[Cheney and Kincaid(2008)]{book-numerical-cheney}
Cheney, W.\;and Kincaid, D. (2008)
\newblock \emph{Numerical Mathematics and Computing, 6th Edition}.
\newblock Thomson Brooks/Cole

\bibitem[Cleveland et~al.(1992)]{book-loess-cleveland}
Cleveland, W.\;S., Grosse,  E.\;and  Shyu, W.\;M. (1992)
\newblock Local regression models.
\newblock \emph{Chapter 8 of Statistical Models in S, eds:
Chambers, J.M.  and  Hastie, T.J.}, Wadsworth \& Brooks/Cole, 309--376

\bibitem[Franklin et~al.(2012)]{paper-exposure-franklin}
Franklin, M., Vora, H., Avol, E., McConnell, R.,  Lurmann, F.,  Liu, F., Penfold, B., Berhane, K., Gilliland, F., and
Gauderman, W.\;J. (2012)
\newblock Predictors of intracommunity variation in air quality.
\newblock \emph{Journal of Exposure Science and Environmental Epidemiology}, 22:135--147

\bibitem[Gaffney et~al.(1976)]{Gaffney-1976}
Gaffney, P.\;W.,  and Powell M.\;J.\;D. (1976)
\newblock Optimal interpolation.
\newblock \emph{in: G.A. Watson (Ed.), On Numerical Analysis, Lecture Notes in Mathematics, vol. 506},
Springer, Heidelberg, Berlin, 90--99


\bibitem[Gauderman et~al.(2007)]{paper-exposure-gauderman-2007}
Gauderman, W.\;J., Vora,  H., McConnell, R., Berhane, K.,   Gilliland, F., Thomas, D., Lurmann F, Avol E, Kunzli N, Jerrett M and Peters J.  (2007)
\newblock Effect of exposure to traffic on lung development from 10 to 18 years of age: a cohort study.
\newblock \emph{Lancet}, 369(9561):571--577

\bibitem[Grant and Boyd(2008)]{Grant-2008}
Grant, M.\;and Boyd, S. (2008)
\newblock Graph implementations for nonsmooth convex programs, Recent Advances in Learning and Control (a tribute to M.
Vidyasagar), editors: Blondel, V., Boyd, S.\;and  Kimura, H., pages 95--110.
\newblock \emph{Lecture Notes in Control and Information Sciences, Springer},
\url{http://stanford.edu/~boyd/graph_dcp.html.}

\bibitem[Grant and Boyd(2013)]{Grant-2013}
Grant, M.\;and Boyd, S. (2013)
\newblock CVX: Matlab software for disciplined convex programming, version 2.0 beta. \url{http://cvxr.com/cvx}

\bibitem[Hosseini et~al.(2013)]{hosseini-2013-sparse-tech}
Hosseini, R., Takemura, A.\;and Berhane, K. (2013)
\newblock A framework for fitting functions with sparse data,
\newblock \emph{Technical Report}, METR 13-33, University of Tokyo


\bibitem[Hastie at~al.(2009)]{book-hastie-tib}
Hastie, T.,   Tibshirani, R. and  Friedman, J. (2009)
\newblock \emph{The Elements of Statistical Learning: Data Mining, Inference, and Prediction. Second Edition},
\newblock Springer

\bibitem[Meijering(2002)]{paper-interpolation-meijering}
 Meijering, E. (2002)
\newblock A chronology of interpolation: from ancient astronomy to modern signal and image processing.
\newblock \emph{Proceedings of the IEEE}, 90(3):319--342

\bibitem[{Sergeyev and Kvasov(2010)}]{Sergeyev-2010}
Sergeyev, Y.\;D. and Kvasov, D.\;E. (2010)
\newblock Lipschitz Global Optimization.
\newblock \emph{In Wiley Encyclopedia of Operations Research and Management Science, Editor-in-Chief: James J. Cochran},
John Wiley \& Sons, Inc.

\bibitem[Silverman(1985)]{paper-smoothing-silverman}
Silverman, B.\;W.  (1985)
\newblock {Some aspects of the spline smoothing approach to non-parametric curve fitting.}
\newblock \emph{Journal of the Royal Statistical Society. Series B (Methodological)},  47(1):1--52

\bibitem[Sukharev(1978)]{Sukharev-1978}
Sukharev, A.\;G.  (1978)
\newblock Optimal method of constructing best uniform approximation for functions of a certain class.
\newblock \emph{U.S.S.R. Computational Mathematics and Mathematical Physics}, 18(1):21--31
\end{thebibliography}
\end{document}